\newif\if@restonecol  
\begin{document}
	\title{PDMA: Probabilistic Service Migration Approach for Delay-aware and Mobility-aware Mobile Edge Computing}
	
	\author[1]{Minxian Xu}

\author[1]{Qiheng Zhou*}

\author[2]{Huaming Wu}

\author[3]{Weiwei Lin}

\author[1]{Kejiang Ye}

\author[4]{Chengzhong Xu}
	
\authormark{M. Xu \textsc{et al}}	

\address[1]{\orgdiv{Shenzhen Institutes of Advanced Technology, Chinese Academy of Sciences}, \orgaddress{\state{Shenzhen}, \country{China}}}

\address[2]{\orgdiv{Center for Applied Mathematics, Tianjin University}, \orgaddress{\state{Tianjin}, \country{China}}}

\address[3]{\orgdiv{School of Computer Science and Engineering, South China University of Technology}, \orgaddress{\state{Guangzhou}, \country{China}}}

\address[4]{\orgdiv{State Key Lab of IOTSC, Department of Computer Science}, \orgname{University of Macau}, \orgaddress{\state{Macau SAR}, \country{China}}}

%\address[3]{\orgdiv{Org Division}, \orgname{Org Name}, \orgaddress{\state{State name}, \country{Country name}}}

% \corres{*Qiheng Zhou (\email{qh.zhou@siat.ac.cn) }}
\corres{*Qiheng Zhou (\email{zhouqh7@outlook.com) }}

	% note the % following the last \IEEEmembership and also \thanks - 
	% these prevent an unwanted space from occurring between the last author name
	% and the end of the author line. i.e., if you had this:
	% 
	% \author{....lastname \thanks{...} \thanks{...} }
	%                     ^------------^------------^----Do not want these spaces!
	%
	% a space would be appended to the last name and could cause every name on that
	% line to be shifted left slightly. This is one of those "LaTeX things". For
	% instance, "\textbf{A} \textbf{B}" will typeset as "A B" not "AB". To get
	% "AB" then you have to do: "\textbf{A}\textbf{B}"
	% \thanks is no different in this regard, so shield the last } of each \thanks
	% that ends a line with a % and do not let a space in before the next \thanks.
	% Spaces after \IEEEmembership other than the last one are OK (and needed) as
	% you are supposed to have spaces between the names. For what it is worth,
	% this is a minor point as most people would not even notice if the said evil
	% space somehow managed to creep in.
\abstract[Summary]{As a key technology in the 5G era, Mobile Edge Computing (MEC) has developed rapidly in recent years. MEC aims to reduce the service delay of mobile users, while alleviating the processing pressure on the core network. MEC can be regarded as an extension of cloud computing on the user side, which can deploy edge servers and bring computing resources closer to mobile users, and provide more efficient interactions. However, due to the user's dynamic mobility, the distance between the user and the edge server will change dynamically, which may cause fluctuations in Quality of Service (QoS). Therefore, when a mobile user moves in the MEC environment, certain approaches are needed to schedule services deployed on the edge server to ensure the user experience. In this paper, we model service scheduling in MEC scenarios and propose a delay-aware and mobility-aware service management approach based on concise probabilistic methods. This approach has low computational complexity and can effectively reduce service delay and migration costs. Furthermore, we conduct experiments by utilizing multiple realistic datasets and use iFogSim to evaluate the performance of the algorithm. 
%The experiment results on rush hour traces show that the algorithm proposed in this work optimizes the performance on service delay and migration cost with \color{red} 8\% to 20\% and over 75\% improvement, respectively, compared with baselines. 
The results show that our proposed approach can optimize the performance on service delay, with 8\% to 20\% improvement and reduce the migration cost by more than 75\% compared with baselines during the rush hours.
%over 75\% migration cost compared with baselines during the rush hours.
}

\keywords{Mobile Edge Computing, Edge Service Allocation, Service Migration, Latency, User Mobility }

\maketitle

\section{Introduction}
Driven by the fast growth of the Internet of Things (IoT) applications, low latency has become a major concern for service providers to ensure user experience. Traditionally, the services can be supported via cloud computing platforms, however, due to the increase in network loads, accessing resources from remote servers in Clouds can lead to higher transmission costs and delays, which can also degrade the user experience~\cite{XuCSUR2019}. To improve the user experience, computing resources can be moved in close proximity to the end-users. Therefore, as a new paradigm complementing cloud computing, Mobile Edge Computing (MEC) has been proposed to enable efficient management of applications with low latency and constrained energy~\cite{IEEEhowto:Galloway}.

The motivation of MEC is to extend the cloud computing capabilities to the edge of the network \cite{XU2019JSS}. The services and applications can be deployed on the edge servers, and the part of the tasks on users’ devices can be offloaded to the edge servers.
%Revised by Qiheng:
{Therefore, services can be executed coordinately based on the resources provisioned by both local mobile devices and edge servers \cite{Wu2020IoTJ}.} 
This feature has made MEC an attractive paradigm for many delay-sensitive applications, e.g., Internet of Vehicles (IoV), Augmented Reality (AR), and Virtual Reality (VR)~\cite{Antonio2019SPE}. 

However, how to deploy the services to the specific edge servers is a challenge, since deploying the service to an edge server with a long distance to the user's device can still cause high latency and the edge servers can be heterogeneous \cite{Ali2020SPE}. Failing to support the users with satisfactory Quality of Service (QoS) will undermine the benefits of the MEC paradigm. In addition, compared with the traditional cloud computing paradigm, %deploying edge services to edge servers becomes a more challenge . 
%Currently, the applications can be constructed as a set of self-contained components, which are also called microservices. The microservices bring flexibility and scalability as they can be deployed and migrated in a fine-grained manner. Compared with the virtual machines deployment and migration, the microservices are much more lightweight. Benefiting from these features, many enterprises, such as Facebook and Amazon, have started to apply the microservice-based architecture. Some service providers, like Google, have also established container management platform (Kubernetes) derived from microservice paradigm. 
MEC brings a new challenge when mobile users (e.g., cars) move among different geographical locations~\cite{Guo2019SPE}. The distance and the latency between the users and edge devices keep changing dynamically due to the movement of users. Therefore, the services are required to be redeployed or migrated to suitable locations for the users to ensure QoS.

To address the above challenges, some key questions should be answered, including how to place the edge services in the initial phase as well as when and where to migrate the edge services according to the system running status. Trade-offs can exist when selecting the moment and locations to migrate services. For example, too frequent migrations can lead to communication overheads, while not migrating services can produce service response delay when a user is moving from the original location to another location that is far away from the original one.
%Added by Qiheng:
The running and network status of edge servers change with time, which may also affect the service delay.
When it comes to developing efficient service allocation algorithms for the MEC environment, it is significant to take the delay and user mobility into consideration. However, an optimal deployment solution for edge services can become inefficient if the deployment takes a few minutes to complete. Therefore, an efficient service allocation algorithm that can be executed within a short time would support the MEC environment in a better manner \cite{Badri2020TPDS}.

In this work, we use the probabilistic-based approach to study the service allocation in MEC. %We provide novel contributions beyond \cite{Wan2019}. 
%Compared with the Markov Decision Process (MDP) approaches, our approach can achieve near performance in reducing overall cost while the algorithm complexity is much lower.
Benefiting from the low algorithm complexity, our approach can be easily extended to the realistic environment. The main \textbf{contributions} are as below:

\begin{itemize}
\color{black}    \item We formulated the optimization problem as a service allocation decision problem to minimize the overall delay and cost, thereby reducing the service delay and migration costs. User mobility related to service delay is also considered in our problem. We prove that the optimization problem is NP-hardness.
\color{black}

    \item We proposed a delay-aware and mobility-aware approach based on Bernoulli trial to determine the decision of deploying services. Through the theoretic analysis, we prove that our approach can be bounded with $1 + \frac{(2 + \varepsilon + \delta) JR}{(1+ \varepsilon + \delta)(J+R)} $. 
    \item We conducted experiments in iFogSim \cite{gupta2017ifogsim} with the base station dataset and the taxi movement dataset derived from realistic traces \cite{crawdad} to evaluate the performance of our proposed algorithm. The results demonstrate that our approach can achieve better performance compared with baselines in terms of overall delay and costs with significant reduction, e.g., 8\% to 20\% reduction in overall delay and over 75\% decrease in migration cost. 
\end{itemize}

The remainder of this paper is organized as follows: Section 2 discusses and compares the related work. Sections 3 presents the system model and the target problem. Section 4 introduces our proposed approach for edge services management. Section 5 demonstrates the results based on realistic datasets by simulating the taxi movement scenario in iFogSim. Finally, the conclusions and future research directions are provided in Section 6.

\section{Related Work}

Edge service allocation management in MEC has been investigated by some research work. The related work can be categorized with 1) delay-aware management and 2) mobility-aware management. 

\begin{table*}[]
	\centering
	\caption{Comparison of related work}
	\label{tab:relatedwork}
	\resizebox{0.75\textwidth}{!}{%
		\begin{tabular}{|l|c|c|c|c|c|c|c|c|}
			\hline
			\multirow{2}{*}{\textbf{Approach}} & \multicolumn{2}{c|}{\textbf{Algorithm Complexity}} & \multicolumn{2}{c|}{\textbf{User Mobility}} & \multicolumn{2}{c|}{\textbf{Service Management}} & \multicolumn{2}{c|}{\textbf{Service Amount}} \\ \cline{2-9} 
			& \textbf{High} & \textbf{Low} & \textbf{Yes} & \textbf{No} & \textbf{Initial Placement} & \textbf{Migration} & \textbf{Single} & \textbf{Multiple} \\ \hline
			Wang \textit{et al.}~\cite{WangTMC} & $\surd$ &  & $\surd$ &  &  & $\surd$ &  & $\surd$ \\ \hline
			Wang \textit{et al.}~\cite{Wang2019ToN} & $\surd$ &  & $\surd$ &  &  & $\surd$ & $\surd$ &  \\ \hline
			Wang \textit{et al.}~\cite{Wang2015} & $\surd$ &  & $\surd$ &  &  & $\surd$ & $\surd$ &  \\ \hline
			Samanta \textit{et al.}~\cite{Samanta2020} &  & $\surd$ &  & $\surd$ &  & $\surd$ & $\surd$ &  \\ \hline
			Samanta \textit{et al.}~\cite{Samanta2019} &  & $\surd$ &  & $\surd$ &  & $\surd$ & $\surd$ &  \\ \hline
			Poularakis \textit{et al.}~\cite{Poularakis2019InfoCom} &  & $\surd$ &  & $\surd$ & $\surd$ &  &  & $\surd$  \\ \hline
			Pasteris \textit{et al.}~\cite{Pasteris2019InfoCom} & $\surd$ &  &  & $\surd$ & $\surd$ &  & $\surd$ &  \\ \hline
			Zhang \textit{et al.}~\cite{ZHANG2019111} & $\surd$ &  &  & $\surd$ & $\surd$ &  & $\surd$ &  \\ \hline
			Wan \textit{et al.}~\cite{Wan2019} & $\surd$ &  &  & $\surd$ & $\surd$ &  & $\surd$ &  \\ \hline
			Gao \textit{et al.}~\cite{Gao2019} &  & $\surd$ &  & $\surd$ & $\surd$ &  & $\surd$ &  \\ \hline
			Yu \textit{et al.}~\cite{Yu2018GlobeCom} &  & $\surd$ &  & $\surd$ & $\surd$ &  & $\surd$ &  \\ \hline
			Badri \textit{et al.}~\cite{Badri2020TPDS} &  & $\surd$ &  & $\surd$ & $\surd$ &  & $\surd$ &  \\ \hline
			Ouyang \textit{et al.}~\cite{Ouyang2019INfoCOm} &  & $\surd$ & $\surd$ &  &  & $\surd$ & $\surd$ &  \\ \hline
			Wu \textit{et al.}~\cite{Wu2019ICWS} & $\surd$ &  & $\surd$ &  & $\surd$ &  & $\surd$ &  \\ \hline
			
			{\color{black}Ghosh \textit{et al.}~\cite{ghosh2019mobi}} & {\color{black}$\surd$} &  & {\color{black}$\surd$} &  &  &  &  & {\color{black}$\surd$} \\ \hline
			
			{\color{black}Shi \textit{et al.}~\cite{shi2017maga}} & {\color{black}$\surd$} &  & {\color{black}$\surd$} &  &  &  &  & {\color{black}$\surd$} \\ \hline
			{\color{black}	Yu \textit{et al.}~\cite{YU2018722}} & {\color{black}$\surd$} &  & {\color{black}$\surd$} &  &  &  &  & {\color{black}$\surd$} \\ \hline
		
			Our approach &  & $\surd$ & $\surd$ &  & $\surd$ & $\surd$ &  & $\surd$ \\ \hline
		\end{tabular}%
	}
\end{table*}

\subsection{Delay-Aware Edge Service  Management}
Several articles based on Markov Decision Process (MDP) have been proposed. Wang \textit{et al.}~\cite{WangTMC} investigated service coordination among edge clouds to support delay-aware service migration for mobile users. A reinforcement learning approach based on the MDP model is also applied to find the optimal decisions on migrating services. Wang \textit{et al.}~\cite{Wang2019ToN} presented a dynamic service migration approach for MEC based on MDP to find the optimal solution.  An approach based on service migration by using MDP to capture costs to design service migration policy was also proposed in~\cite{Wang2015}.
However, the solution space of these approaches grow fast when the number of edge devices and users increase, the scalability can be the limitation for these approaches to be applied to the realistic scenario. Therefore, some assumptions need to be relaxed when applying in a practical scenario. Different from these works, our proposed work can be performed with much lower complexity. 

Samanta \textit{et al.}~\cite{Samanta2020} studied a dynamic microservice scheduling scheme for MEC-enabled IoT to improve the network throughput and ensure the QoS for users. A distributed and latency-aware microservice scheduling policy was also introduced to reduce service latency while ensuring transmission rate to minimize microservices completion time~\cite{Samanta2019}. Poularakis \textit{et al.}~\cite{Poularakis2019InfoCom} investigated an algorithm to jointly optimize service placement and request routing to support multi-cell networks in MEC under multiple resource constraints. The proposed algorithm can achieve near-optimal performance for utilizing available resources to maximize the number of requests and reducing latency. Pasteris \textit{et al.}~\cite{Pasteris2019InfoCom} considered service placement in MEC with heterogeneous resources. Their objective is to maximize the total reward by using an approximate algorithm. The designed algorithm can be applied to both small scale and large scale environments with two subroutines. 
Unfortunately, user mobility is not considered in these works, and most of them focused on the initial placement of services. In contrast, both the initial service placement and dynamic service migration are considered in our work.

 \subsection{Mobility-Aware Edge Service Management}

Mobility-aware service management in MEC has attracted the attention of researchers. Zhang \textit{et al.}~\cite{ZHANG2019111} presented a deep Q-network based approach for task migration in the MEC system, which can obtain the optimal migration policy without knowing the information of user mobility. Wan \textit{et al.}~\cite{Wan2019} introduced a joint optimization methodology to assign resources to tasks based on evolutionary computation by considering power usage and latency in MEC environment. Gao \textit{et al.}~\cite{Gao2019} introduced an approach to jointly optimize the access network selection and service placement in MEC. The long-term optimization problem is decomposed into a set of one-shot problems to reduce computation time. Yu \textit{et al.}~\cite{Yu2018GlobeCom} investigated collaborative service placement in MEC to relieve the limited capacity of base stations by collaboratively utilize the resources from adjacent base stations. To solve this optimization problem, a decentralized algorithm based on matching theory was also proposed. Badri \textit{et al.}~\cite{Badri2020TPDS} considered energy-aware service placement in MEC as a multi-stage stochastic program, which can maximize the QoS under the constraint of limited energy. A parallel sample average approximate algorithm was also proposed to solve the energy-aware problem. 
These works consider multiple objectives optimization, while the service migration is not modelled in these articles. In contrast, we consider service migration in our proposed approach. 

Ouyang \textit{et al.}~\cite{Ouyang2018} proposed a mobility-aware service placement framework for cost-efficient MEC. In this approach, the latency is reduced under the constraint of long-term migration cost based on Lyapunov optimization. 
An adaptive service placement mechanism aimed at improving latency and service migration cost was also presented~\cite{Ouyang2019INfoCOm}, which was modelled as a contextual multi-armed bandit problem and solved by an online algorithm based on the Thompson-sampling approach. 
Wu \textit{et al.}~\cite{Wu2019ICWS} formulated the mobility-aware service selection problem in MEC as an optimization problem and solved it by a heuristic algorithm that combines the genetic algorithm and simulated annealing algorithm. The proposed approach can reduce the response time of service and the algorithm execution time is one order of magnitude lower than the baselines. Our work differs from these studies by considering multiple services management rather than single service management. 

{\color{black} 
 Ghosh \textit{et al.}~\cite{ghosh2019mobi} proposed a mobility-driven cloud-fog-edge collaborative real-time framework. The framework considers user mobility and performs location prediction of users using the Hidden Markov model. It enables efficient information processing and reduces delay. Although considering user mobility, these approaches focus on computation offloading, in mobile edge computing.
Shi \textit{et al.}~\cite{shi2017maga} introduced a mobility-aware computation offloading decision method. It takes user mobility into consideration and adopts an adaptive genetic algorithm for offloading decisions. The proposed method can achieve a better offloading success rate and lower energy consumption of mobile users. Yu \textit{et al.}~\cite{YU2018722} presented a dynamic mobility-aware partial offloading algorithm to investigate the amount of data to be offloaded, which is based on location prediction to minimize energy consumption and service delay. Different from the service management approach in our approach, they perform location prediction and optimize the service delay via offloading decision instead of service placement among edge servers.
}

\subsection{Critical Analysis}
Our proposed work and the related work is compared in Table~\ref{tab:relatedwork}. As a summary, our work contributes to the growing body of research in service allocation in MEC. To address the delay-aware and mobility-aware challenges of MEC, we apply a probabilistic-based algorithm for both service initial assignment and dynamic service migration with low computation complexity. We also consider the user mobility and multiple service management during our service allocation process. In addition, the performance of our approach is validated based on the data derived from realistic traces.

\section{System Model and Problem Statement}

\color{black}In this section, the system model is introduced for the key components of service scheduling in the typical MEC system. The model provides mechanisms for abstracting various functions and operations into an optimization problem. A case study is also provided to clarify the process of service scheduling in the MEC scenario. \color{black}

\color{black} \subsection{MEC System Model}

\begin{figure}[ht]
	\centering
	\includegraphics[width=0.6\textwidth]{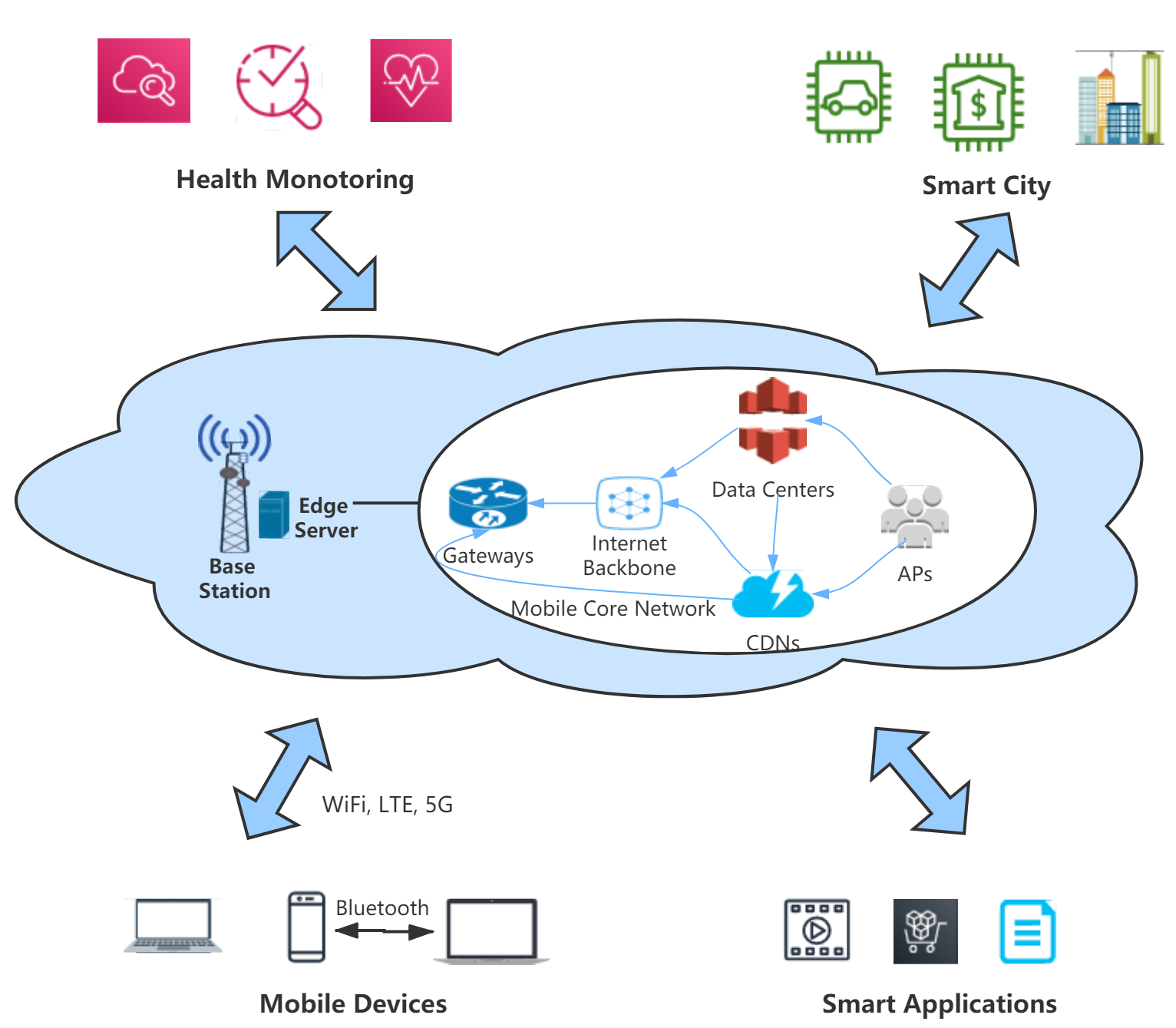}
	\caption{\color{black}MEC System Model}
 	\label{fig:MECModel}
\end{figure}

As shown in Figure~\ref{fig:MECModel}, the typical MEC system model can provide support for various applications, e.g., health monitoring, smart city, and smart mobile applications. The model contains several key components include mobile users and edge servers. Mobile users can utilize mobile devices and applications to request services from the edge servers via wireless access points (APs). The edge servers are generally small-scale data centers deployed by cloud and telecom operators near mobile users, which can connect to data centers through a gateway via the Internet. The edge servers and mobile users are separated by the air interface based on the advanced wireless communication and networking technologies. 

From the communication perspective, in MEC systems, communications are typically between mobile users and APs with the possibility of the device to device (D2D) communications. The edge servers can be co-located with wireless APs, such as base stations and WiFi routers, which can significantly reduce the capital expenditure. Apart from providing the wireless interface for edge servers, the wireless APs can also support access to resources from remote data centers via backhaul links. The computation tasks can be offloaded between the edge serves and remote data centers. If the mobile device cannot connect to edge servers due to the limited wireless interfaces, the D2D communications with neighboring devices can be complementary. To improve user experience, the content delivery networks (CDNs) provide the cached data that enable the users to access the data efficiently. 

Currently, different commercial technologies can be utilized for mobile communications, e.g., 5G network based on the combination of long-term evolution (LTE) and new radio-access technologies has been standardized and put into commercial use. These technologies can support efficient wireless communication from mobile users to APs for varying data rates and transmission ranges. Bluetooth can be used for short-range D2D communications in the MEC system. And WiFi, LTE, and 5G are more powerful technologies for long-range communications between mobile users and edge servers, which can be dynamically switched based on the link reliability. \color{black}

\subsection{Problem Definitions}
In this subsection, we will introduce the objectives that our approach aims to achieve, including the overall delay and migration cost, which present the costs from the perspectives of user and service provider respectively.

\subsubsection{Basic Entities}
\color{black}
In our model, we use $U$ to represent a set of users with size $N$, where $u_i \in U$ and $i \in \{1, 2, \ldots, N\}$. 
Let $E$ to represent a set of edge servers with size $J$, where $E_j \in E$ and $j \in \{1, 2, \ldots, J\}$
and $BS$ to represent base stations with size $L$, $BS_l \in BS$ and $l \in \{1, 2, \ldots, L\}$. Let $S$ denote the set of services with size $R$, where $S_r \in S$ and $r \in \{1,2, \ldots, R\}$.

%Revised by Qiheng: la->lat, lg->lng
For each user $u_i \in U$, we denote geographic location at time $t$ by $p_i(t)=(lat_i, lng_i)$, specifically in latitude $lat_i$ and longitude $lng_i$. Each user utilizes one edge service represented as $S_{u_i}$, which needs to be deployed on edge servers. 
We denote the edge server by $E_j \in E$, which is attached to a base station $BS_l$. Therefore, the geographical location of the edge server is the same as that of the base station, denoted as $p_l=(lat_l, lng_l)$. The resource capacity of each edge server is represented by $C_{E_j}(a_j, m_j, n_j)$, where $a_j, m_j$, and $n_j$ represent the CPU, memory, and network capacity respectively, and the capacity will affect the computation latency of the services. 
User $u_i$ accesses the services $S_{u_i}$ from the edge servers $E_j$ is denoted as $E_{u_i,j}$. 
%Each edge server can host multiple services. We use $DS(i)$ to denote all services currently deployed on the edge server $ES_i$.

At time $t$, user $u_i$ connects to the nearest base station $BS_l$, we denote the distance between the user $u_i$ and the base station $BS_l$ as $D_{i,l} = \left \| p_i(t)-p_l \right\|$, that is the Euclidean distance between the user and the base station. The nearest base station is denoted as $current\ base\ station$, and as the user moves, $current\ base\ station$ of the user will automatically switch. 
And, $selected\ base\ station$ denotes the base station on which the edge server that hosts $u_i$'s service is deployed. 
Because of the switch of $current\ base\ station$ and the variation of the edge server's workload, the latency of the edge service may increase, which are modelled as parameters in our model. Therefore, a scheduling algorithm is required to determine if the service should be migrated to another edge server and to choose a destination server to perform the service migration. 

\color{black}

\subsubsection{Overall Delay}
In our model, the $overall\ delay$ mainly consists of three parts, namely, $communication\ delay$, $computation\ delay$ and $migration\ delay$. We will explain these three parts of delay in the following sections, which can determine the user experience.
%Revised by Qiheng: which can represent the user experience. 

\textbf{Communication delay} can be split into two parts, including the delay of the data transmission from user $u_i$ to its $current\ base\ station$ and the delay of $current\ base\ station$ forwarding data to the base station that hosts the edge server. 

Users perform the first part through the wireless channel. Here, we apply Eq.~(\ref{eq:transRate}) for calculating the maximum transmission rate ($tr$) of the wireless channel based on Shannon Theory~\cite{wyner1974recent}.
\begin{equation}
   tr = W\log_{2}{(1+\frac{S_p}{gN_p})},
   \label{eq:transRate}
\end{equation}
where $W$ denotes the channel bandwidth, $S_p$ denotes the transmission power of the mobile device and $N_p$ denotes the noise power. Besides, channel gain between the location of the user and its $current\ base\ station$ is denoted as $g$, varying as the user moves from one place to another.

The second part of the $communication\ delay$ is the transmission delay between $current\ base\ station$ ($BS_{c}$) and $migrated\ base\ station$ ($BS_{m}$). We use a matrix $M_{c,m}$ to represent the delay between $BS_c$ and $BS_m$. $M_{c,m}$ is infinite if $BS_c$ and $BS_m$ are not connected directly. Therefore, the second part of the communication delay can be computed by finding the shortest path with the minimum transmission delay, which we denote as $D(BS_{c}, BS_{m})$.
%TODO: migrated -> selected

%Providing the task size of $u_i$ is $c_i$, 
Then, we can get the communication delay $T_{cm}$ by:
\begin{equation}
    T_{cm}(u_i, E_j, E_m) = \frac{c_i}{tr} + D(BS_{c}, BS_{m}).
   \label{eq:communication}
\end{equation}
where $c_i$ is the task size of $u_i$.
%  Therefore, the $communication\ delay$ can be represented as
 
\textbf{Computation delay} is the task execution time of services deployed on the edge server. Since each edge server can host several services and execute multiple tasks at the same time, the execution time of each task varies 
%Revised by Qiheng: due to the resource utilization of the edge server.
due to the available resources of the edge server.
%Suppose that the task size is of $u_i$ is $c_i$, %the computational intensity of the task is $w_t$ 
%and the computational intensity allocated to the task by edge server $E_j$ is $w_j$, 
The task execution time $T_{cp}$ can be calculated by:
\begin{equation}
    T_{cp}(u_i, E_j) = \frac{c_i}{w_j}.
   \label{eq:execution}
\end{equation}
where $w_j$ is the computational workload allocated to the task by edge server $E_j$.

\textbf{Migration delay} is the downtime of service migration. During the migration, the service needs to be suspended for a period of time and then the in-memory state of the service will be transferred to the destination edge server. Then, the service will restart on the new edge server and process service requests from mobile users. Therefore, the migration will also increase the service delay due to the downtime, which should also be considered in the overall delay. 
The migration delay $T_m$ can be modelled as: 
\begin{displaymath}
T_m(BS_c, BS_m)  = \left\{ \begin{array}{ll}
0, & \textrm{if $BS_c = BS_m$},\\
M_c, & \textrm{if $BS_c \neq BS_m$},
\end{array} \right.
\end{displaymath}
when the migration is not triggered, the migration delay is 0, and $M_c$ can be a constant, indicating the migration delay. 

Therefore, the overall delay can be calculated based $T_{cp}$ and $T_{cm}$. One of our objectives is to minimize the overall delay to assure user experience that can be represented as:
\begin{equation}
\label{eq:overallDelay}
    \min:\quad\frac{1}{P\times N \times J} \sum_t^P \sum_i^N \sum_j^J \big\{T_{cp}(u_i, E_j) + T_{cm}(u_i, E_j, E_m) + T_m(BS_c, BS_m)\big\},
\end{equation}
where $t\in \{1,2,\ldots,P\}$ represents the observed time interval.

\subsubsection{Migration Cost}
The migration cost is applied as another important metric in our model. From the perspective of service providers, it represents the cost for migrating services and the placement of services. The migration cost of service $S_r$ is denoted as $C^{S_r}_{j,m} = F(S_r, E_j, E_m)$, 
%Revised by Qiheng: where  $F()$ is the function to calculate the computation delay and communication delay from Equations (2) and (3). 
where $F$ is the function to calculate the transmission cost and computation cost of service migration.

As a user keeps moving in a period of time, its service may be migrated many times to reduce the delay and ensure the quality of experience. Thus, we can compute the sum of the cost of all service migration and obtain the migration cost by Eq. (5): \begin{align}
  \min:&\quad C_{o} = \sum_{r=1}^R{\sum_{k=1}^{|SM^r|}}C^{r}_{SM^{r}_{k,j}, SM^{r}_{k,m}}, \\
   \textrm{s.t.}:& \quad\sum_{i} c_i \leq c_j, \forall E_{u_i,j},
   \label{eq:migrationCost}
\end{align}
where $SM^r$ is the set of all services to be migrated, $SM^{r}_{k, j}$ and $SM^{r}_{k, m}$ denote the source edge server $E_j$ and the destination edge server $E_m$ of the $k^{th}$ migration of $SM_r$, respectively. %The requested resources of services should be no more than the maximum capacity of $E_j$.

Our objective function is to minimize the migration cost $C_o$ in Eq. (5), while satisfying the constraint in Eq. (6) that the requested resources of services should be no more than the maximum capacity of $E_j$.

%\begin{equation}
%    s.t. \sum_{i} c_i < c_j, \forall E_{u_i,j}
%\end{equation}

\color{black}\subsection{Case Study} 

\color{black}

\begin{figure}[ht]
	\centering
	\includegraphics[width=0.75\textwidth]{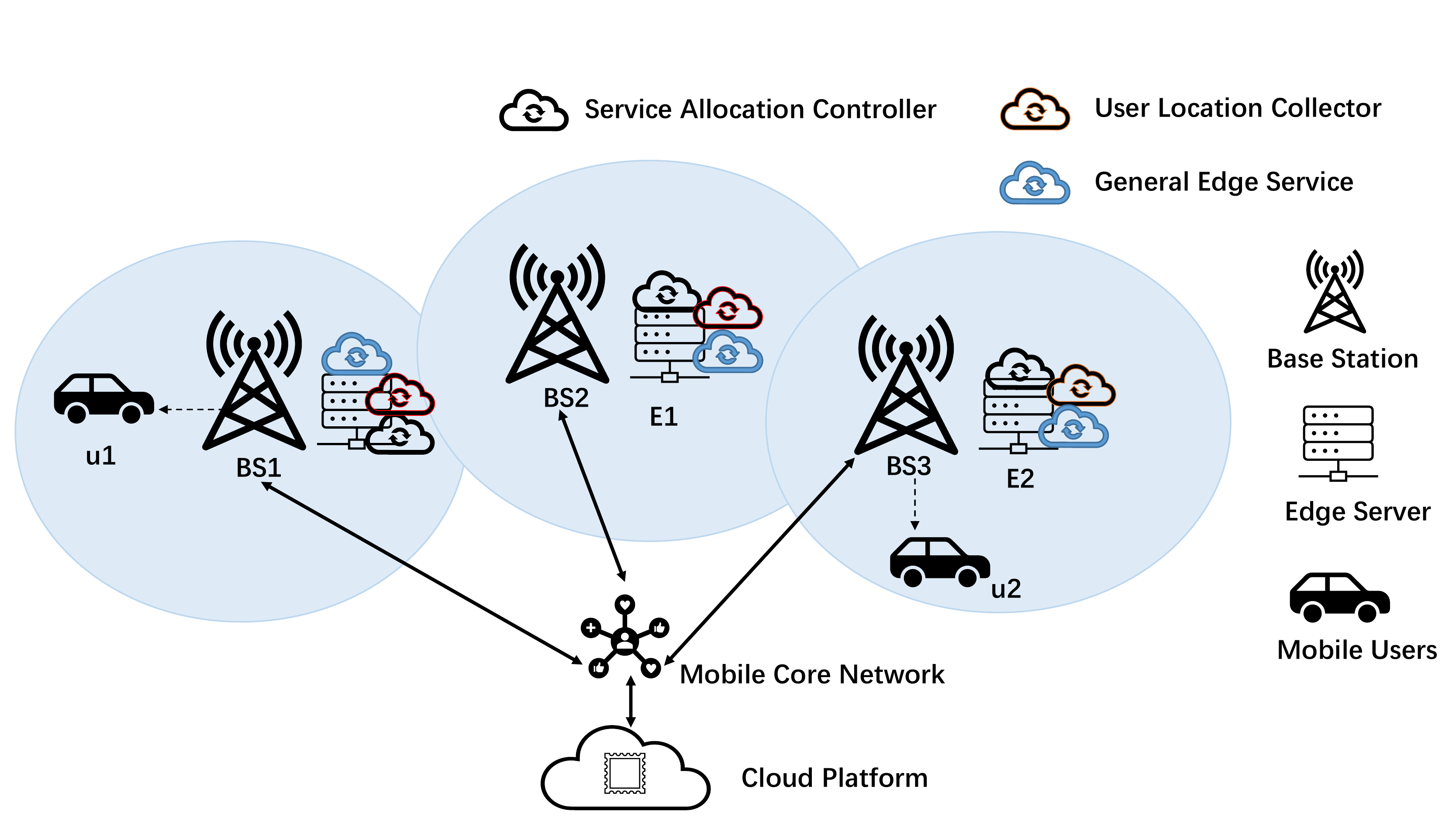}
	\caption{\color{black}Case Study}
 	\label{fig:systemModel}
\end{figure}
\color{black}

As shown in Figure~\ref{fig:systemModel}, we consider a case study consisting of a set of base stations, each co-located with an edge server deployed with multiple edge services, and a group of mobile users moving around among different areas covered by different base stations. To handle the service scheduling process and ensure service continuity, the entities in this case study include: (1) a MEC platform containing edge servers, (2) edge services execute users’ requests, (3) user location collector, (4) service migration controller, and (5) a virtualization infrastructure.

Each mobile user is associated with a specific edge service, which handles users’ requests and can be migrated to another edge server by tracking the mobile user’s movements. For example, as a mobile user approaches the edges of a base station that can be covered, the user location collector running on edge servers informs the nearby edge servers that the mobile user is about to perform the handover to a new area covered by another base station. The information is then used by the service allocation controller to decide whether to migrate the edge service to the other edge server, and in that case, which edge server is to be performed the migration. The service allocation controller has an overview of the entire MEC system and server as the orchestration. Finally, the virtualization infrastructure provides computation, storage, and network resources to provision resources for edge services. It can also manage the migration process by collecting information about the remaining resources of edge servers.

In this use case, as shown in Figure~\ref{fig:systemModel}, we consider the following situation: first, user $u_1$ connects to the base station $BS_1$, but the service required by $u_1$ is deployed on $E_1$ built aside $BS_2$. Thus, to access the service deployed on $E_1$, $u_1$ should transmit message to $BS_1$ first, and then $BS_1$ forwards the message to $BS_2$, which sends the packet to $E_1$. On the other hand, user $u_2$ connects to the base station $BS_3$, attached by the edge server $E_2$ hosting the service of $u_2$, so the message transmission between $u_2$ and $E_2$ does not need other base stations. Our objective is to optimize the allocation of edge services on edge servers to avoid the high delay for users.

Although the use case can be applied to both live and static service migrations, we focus on live and stateful service migration processes. Service migration is ensured between edge servers through the backhaul links that have sufficient data rate, thus the performance of service migration performance is not undermined by network traffics. This assumption can be relaxed, but in this paper, we restrict the analysis to this case, which also conforms with the powerful capability of the 5G scenario. 

To realize our proposed approach, at each time interval, mobile user locations should be gathered to calculate the delay between the user and edge servers. The utilization of edge servers is given as the probability to perform the service allocation between edge servers. The probability is then used in the service scheduling algorithm to compute the possibility of edge servers accepting migrated services. To be noted, the mobile user location collector and the scheduling algorithm in the service allocation allocator are independently executed. Therefore, when the service allocation controller decides which edge service to be migrated and where should be migrated, the mobile user location collector sends the most updated location data for each mobile user.  

\color{black}

\subsection{Proof of NP-hardness}
In this subsection, we prove that the problem we aim to solve is an NP-hardness problem, and the proof is as below:

\textbf{Proof:} We consider the decision version of the set cover problem \cite{Pasteris2019InfoCom}, which is NP-complete. We have set $\mathcal{Z}$, a set $\mathcal{A}$ of subset $\mathcal{Z}$, and a number $k \in \mathbb{N}$. The objective is to find if a set $\mathcal{B}$, which is subset of $\mathcal{A}$, can exist to satisfy $|\mathcal{B}| \leq k$ and $ \bigcup\mathcal{B}= \mathcal{Z}$.

We define the Service Migration with Set Constraints (SMSC) problem as follows:
We assume that we have $|\mathcal{A}|-k+1$ types of services. And each type of service can have multiple services. Each type of service will only be deployed on a single node, which means one node will not have more than one service with the same type. One of the service types is special and denoted as $i'$. Let $S'\triangleq S\backslash \{i'\}$. We also define $V\triangleq \mathcal{A}$. Every node is a subset of $\mathcal{Z}$. For every normal service type $i$, we have a user $u_i$ that needs to connect with this type of service.  For this user, we define $S_{u_i} \triangleq i$ and $E_{u_i} \triangleq V$. For every $z \in \mathcal{Z}$, there is a user $u_z^{'}$. For the user, let $S_{u_z^{'}} \triangleq i'$ and $E_{u_z^{'}} \triangleq \{Y\in\mathcal{A}:z\in Y\}$.

We consider that the solution to the set cover problem is $\mathcal{X}$, and the service migration solution is defined as $M$. For each service $i \in S'$, it chooses an node $j_i$ in $\mathcal{A} \backslash \mathcal{X}$ and $j_i \neq j_{i^*}$. For $i^* \in S'$ with $i^* \neq i$. This can be assured as $|\mathcal{A} \backslash \mathcal{X}| \geq |\mathcal{A}| - k = |S'|$. For each service $i \in S'$, we can define $M_i \triangleq {j_i}$ and $M_{i'} = \mathcal{X}$. This migration operation is feasible as a single node is only deployed with one instance of the same type of service. 
The objective that every user can be satisfied with QoS can be denoted as: consider a service $i \in S'$, the user $u_i$ can be satisfied if $M_{S_{u_i}} \bigcap E_{u_i} = M_i \bigcap V = M_i \neq \emptyset$. For every $z \in \mathcal{Z}$, the user $u_z^{'}$ can be satisfied if there is a set $Y' \in \mathcal{X}$ with $z \in Y$, therefore, $M_{S_{u_i^{'}}}\bigcap E_{k_z^{'}} = M_{i^{'}} \bigcap \{Y \in \mathcal{A}: z \in Y\} = M \bigcap \{Y \in \mathcal{A}: z \in Y\} \supseteq  \{Y'\}$. The above proof shows that a solution exists to satisfy the QoS of users in SMSC problem. 

If all the users are satisfied as defined in the above SMSC problem, we have for every $i \in S'$, the user $u_i$ is satisfied with the QoS requirement. Therefore service $S_{u_i}=i$ must be placed on some nodes. Let $\mathcal{K}$ be the set of all edge nodes $j$ that services in $S'$ are deployed. Since only one service of the same type will be placed on the same node,  we can have $|\mathcal{K}| \geq |S'|$, which is bounded below by $|\mathcal{A}| - k = |V| - k$. Define $\mathcal{A} = V\backslash\mathcal{K}$ which has cardinality at most $k$. 
For any $z \in \mathcal{Z}$, if user $u_z^{'}$ is satisfied, there must be at least one node in $E_{u_z^{'}} = \{Y\in\mathcal{A}:z\in Y\}$ that service $ S_{u_z^{'}} = i'$ is placed. We denote this node as $Y'$ and $Y' \notin \mathcal{K}$. The reason is $i'$ is already deployed there. We consider that a single node only hosts a single instance of the same type of service, no need to deploy services on nodes in $S'$. Then we can have $Y'\in \mathcal{X}$ and $z \in \bigcup \mathcal{X}$. Since it holds every $z\in \mathcal{Z}$, we can have $\bigcup\mathcal{X}=\mathcal{Z}$, which means $\mathcal{X}$ is the solution to the set cover problem.

The above proves that the SMSC is NP-hard.

\section{Probabilistic Delay-aware and Mobility-aware Approach for edge service management}
In this section, we introduce \textit{\textbf{P}}robabilistic based \textit{\textbf{D}}elay-aware and \textit{\textbf{M}}obility-aware \textit{\textbf{A}}pproach (\textit{\textbf{PDMA}}) for edge service management.
We focus on two service scheduling procedures, \textit{Service Assignment} and \textit{Service Migration}. We present our probabilistic-based algorithms to perform service scheduling while reducing service latency and migration costs. The proposed approach is inspired by the probabilistic method proposed in [26] for VM consolidation in the cloud computing environment, and revisions have been made to adapt to the scenario of MEC.

\subsection{Service Assignment}
In the service assignment procedure, when a mobile user sends a service request, the service assignment algorithm assigns a suitable edge server to host the service. In most service assignment algorithms, cloud coordinators need to perform computation-intensive calculations to determine an allocation decision, which requires massive computing resources and long processing time. Additionally, in MEC, the edge-to-cloud communication delay is much higher than the edge-to-edge delay. Therefore, performing the scheduling algorithms by the cloud coordinator will greatly increase the service delay.

To address this, our approach leaves the decisions to each edge server. The mobile user sends the request for a service assignment to the edge servers. And, instead of sending this request to the cloud, each edge server decides whether to host new services based on the current resource utilization and network status.

For example, we consider the CPU utilization of edge servers as the metric when making a service assignment decision. If the CPU utilization is close to or even exceeds the utilization threshold, the edge server is very likely to be overloaded after deploying a new service, which will decrease the processing capability of the edge server and thus lead to higher service latency. To avoid performance degradation, the edge server should not accept the service assignment request. On the other hand, those idle servers can be shut down to reduce energy consumption by migrating services to other servers. Other edge servers with moderate CPU utilization will have a higher success rate when making assignment decisions. However, different from the VM scheduling in the cloud data center, the service scheduling in MEC involves the data transmission of mobile services over long distances via wireless channels. Therefore, edge servers should also take the transmission cost into consideration when making service assignment decisions.

In our approach, the edge server performs a Bernoulli trial to make the service assignment decision. A successful trial indicates that the server can host a new service. We utilize an \textit{assignment function} to determine the probability of a successful trial. The assignment function is defined as follows:
\begin{align}
    f(x,p,T) &= \frac{1}{M_p} x^p(T-x),\ 0 \leq x \leq 1,
    \label{eq:assignment_function}\\
    M_p &= \frac{p^p}{(p+1)^{p+1}}T^{p+1},
    \label{eq:Mp}
\end{align}
where $x$ is the utilization of a certain resource of the edge server, $T$ is the upper threshold of the utilization of this type of resource, e.g., CPU, and $p$ is the shape parameter that can adjust the probability distribution. If $x > T$, the value of the assignment is 0, which means rejecting the service assignment request. $M_p$ is a regularization parameter used to adjust the value of the assignment function $f$ to the maximum value of 1.
\begin{figure}[ht]
	\centering
	\includegraphics[width=0.46\textwidth]{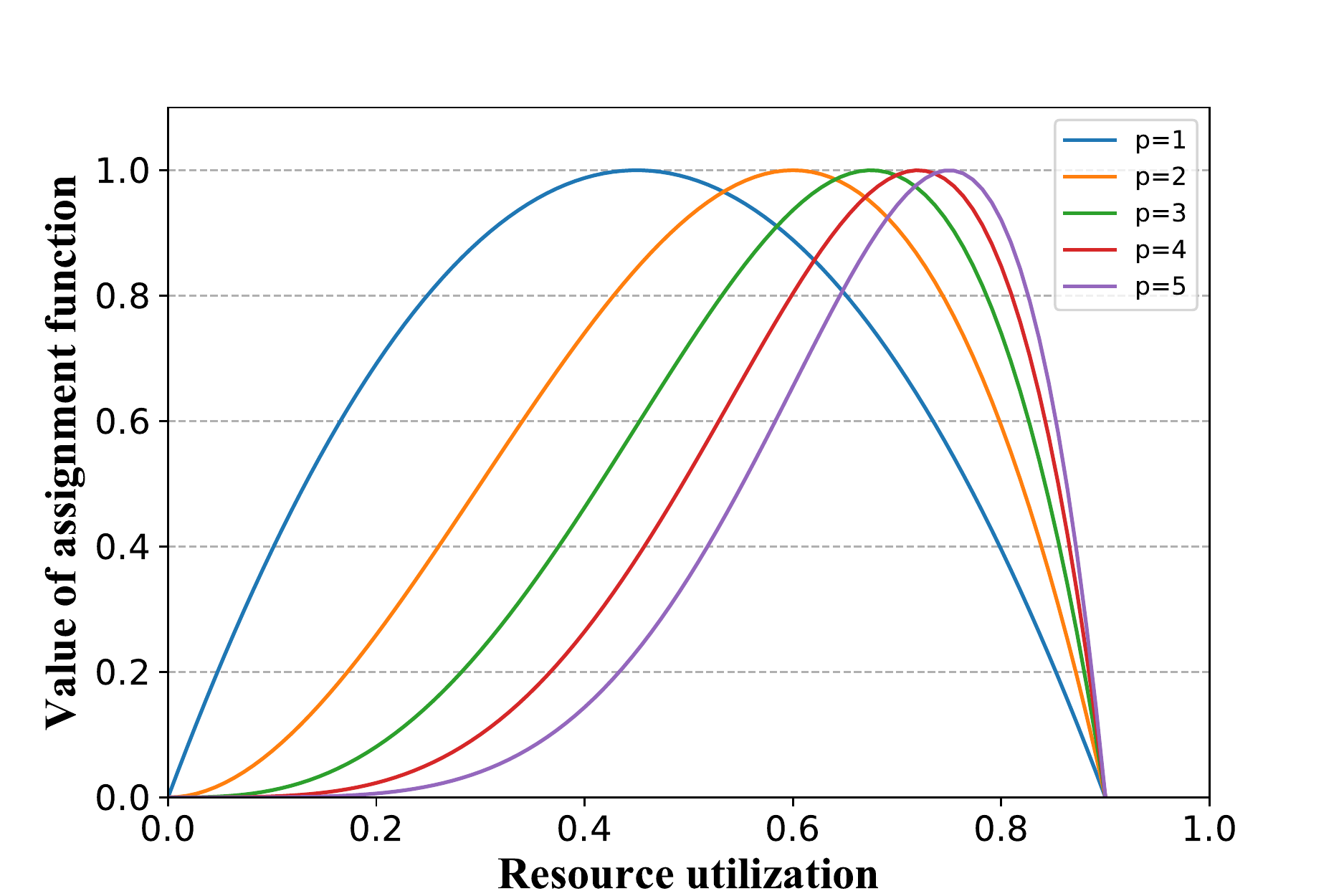}
	\caption{Assignment Function, $T$ = 0.9}
 	\label{fig:assignmentFunction}
\end{figure}

Figure \ref{fig:assignmentFunction} shows the distribution of the assignment function with varied $p$ value. Assuming that the resource utilization threshold $T$ is 0.9, the resource utilization $x$ changes within the range of [0,$T$]. As shown in the figure, under different shape parameters, the probability value of the assignment function first increases with the growth of resource utilization $x$, and reaches the maximum value when $x=\frac{pT}{p+1}$. This trend conforms to the basic idea of our service assignment. For different shape parameters, the value of this function is also very low when the resource utilization is close to the threshold, which avoids the allocation of new services to the server that is prone to be overloaded. We can also observe that when the shape parameter is larger, the highest acceptance probability is closer to the threshold. Therefore, we can alter the shape parameter based on the average resource utilization to adjust the probability of service assignment.

If the trial is successful, it means the edge server agrees to accept the deployment of the new service and responses with an acceptance message. The  coordinator is responsible for collecting all the messages and selecting the most suitable edge server to host the service. Specifically, the most suitable one can be the server with the least migration cost to minimize the impact of service migration on user service experience. If all the edge servers reject the assignment request, the coordinator should scale the edge data center (e.g., deploy a new edge server), and allocate the new edge services.

\color{black} The pseudocodes of the service assignment algorithm are described in Algorithm 1. The algorithm focuses on assigning the associated edge service for a mobile user to a specific edge server. The service assignment algorithm can be utilized for both initial service assignment or service assignment in the service migration process. 
First, the mobile user connects to the nearest base station and sends the service request to the edge server. If $initial$ is true, it means that the approach is performing the initial placement of a mobile service, so the algorithm sets the location of the service the same as the mobile user (lines 1-2). 
The algorithm attempts to collect assignment decisions from all edge servers that satisfy the delay threshold at the current time slot. $ES\_Candidate$ includes all edge servers that accept the service assignment request (lines 4-9). If the candidate list is not empty, we select an edge server that is nearest to the current location of service to assign the service.  Otherwise, if no available edge server is in the candidate list, the algorithm should perform $scaleUp$ to switch on an idle edge server to host the service (lines 10-14).

%The application of the low-complexity service assignment function avoids the  intensive computations. In addition, we can also extend the assignment function to include more factors (e.g. memory, storage) in our service assignment decision.

\begin{algorithm2e}[h]
\color{black}
    \caption{Service Assignment Algorithm}
    \KwIn{Mobile User $U_k$, Edge Service $S_k$, Edge Server List $ES\_List$, Initial Assignment $initial$, Delay Threshold $T_d$}

    \If{$initial == True$}{
        $S_k.location \leftarrow U_k.location$
    }

    $ES\_Candidate = []$

    \For{$ES_i\ in\ ES\_List$}{
        $delay \leftarrow getDelay(U_k, S_k, ES_i)$ according to Eq. (2)

        \If{$delay < T_d$}{
            $decision \leftarrow getAssignmentDecision(ES_i)$ based on Eq. (7)
            
            \If{$decision == Accept$}{
                % $ESCandidate.add(ES_i)$
                $ES\_Candidate = ES\_Candidate \cup ES_i$
            }
        }
    }

    \eIf{$ES\_Candidate\ is\ empty$}{
        $newES \leftarrow\ scaleUp()$
    }{
        $newES \leftarrow\ getNearest(S_k, ES\_Candidate)$
        % $SortByDistance(ESCandidate)$
    }
    % Assign $S$ to $newES$
    $Assign(S_k,newES)$
\end{algorithm2e}

\color{black}
\textbf{Complexity analysis of Algorithm 1:} the decision of whether the service is an initial assignment or not (lines 1-2) takes a time of $O(1)$; the candidate edge servers collection process (lines 4-9) takes a time of $O(J)$, where $J$ is the number of edge servers; the edge server selection from the candidate list (lines 10-14) takes a time of $O(Jlog(J))$, which is based on sorting algorithm; and the final assignment operation takes $O(1)$ time. Therefore, we can conclude that the time complexity of Algorithm 1 is $O(Jlog(J))$. 
	\color{black}

\subsection{Service Migration}
Since the mobile user moves in real-time, it may move away from the edge server, leading to higher communication latency and affecting the service experience. 
At the same time, the running status of mobile services can also change dynamically. For example, the load of edge servers may exceed the upper threshold and make the edge servers become over-utilized, resulting in performance degradation.
Over-utilized edge servers may not be able to process tasks efficiently, which leads to increase in the service delay. Therefore, it is required to detect the overloaded situation and perform dynamic service migration to optimize the deployment of services and reduce the delay. In the following part of this section, we will introduce our service migration algorithm.

\color{black}
We divide the service migration into two situations for consideration, and the pseudocodes are shown in Algorithm 2.

(1) \textit{Delay violates the threshold} (lines 3-9). 
%The communication delay between the mobile user and the edge server is monitored during the runtime of the service. 
The service delay is monitored during the runtime of the service. 
When the delay exceeds the predefined threshold, the service needs to be migrated to ensure the QoS. The current edge server needs to find the other edge servers that can meet the communication delay requirements, and add them to a candidate list. Afterwards, another round of service allocation should be performed based on the candidate list. 

(2) \textit{Edge server becomes over-utilized} (lines 14-22). Under this scenario, although the service delay can satisfy the delay requirement,
%the communication delay between the user and the edge server can satisfy the delay requirement, 
the edge server becomes over-utilized. Therefore, service migration is also required to optimize the resource utilization of edge servers and avoid performance degradation. 
%Services can be migrated to optimize the resource utilization of edge servers. 
To achieve this, a \textit{high migration function} $f^{h}_{m}$ is required for the migration decision. 
% The migration decision in this case requires two  functions based on probability, including $f^{l}_{m}$ (low migration function) and $f^{h}_{m}$ (high migration function).
% \begin{equation}
%     f^{l}_{m} = (1 - \frac{x}{T_{l}})^{\alpha}
%     \label{equation:lowMigration}
% \end{equation}
% $T_{l}$ is the lower threshold in resource utilization, and $\alpha$ is the shape parameter.
\begin{equation}
    f^{h}_{m} = \Big(1 + \frac{x-1}{1- T_{h}}\Big)^{\beta},
    \label{equation:highMigration}
\end{equation}
where $x$ is the resource utilization of edge server, $T_{h}$ is the upper threshold in resource utilization, and $\beta$ is the shape parameter. 

\begin{figure}[ht]
	\centering
	\includegraphics[width=0.5\textwidth]{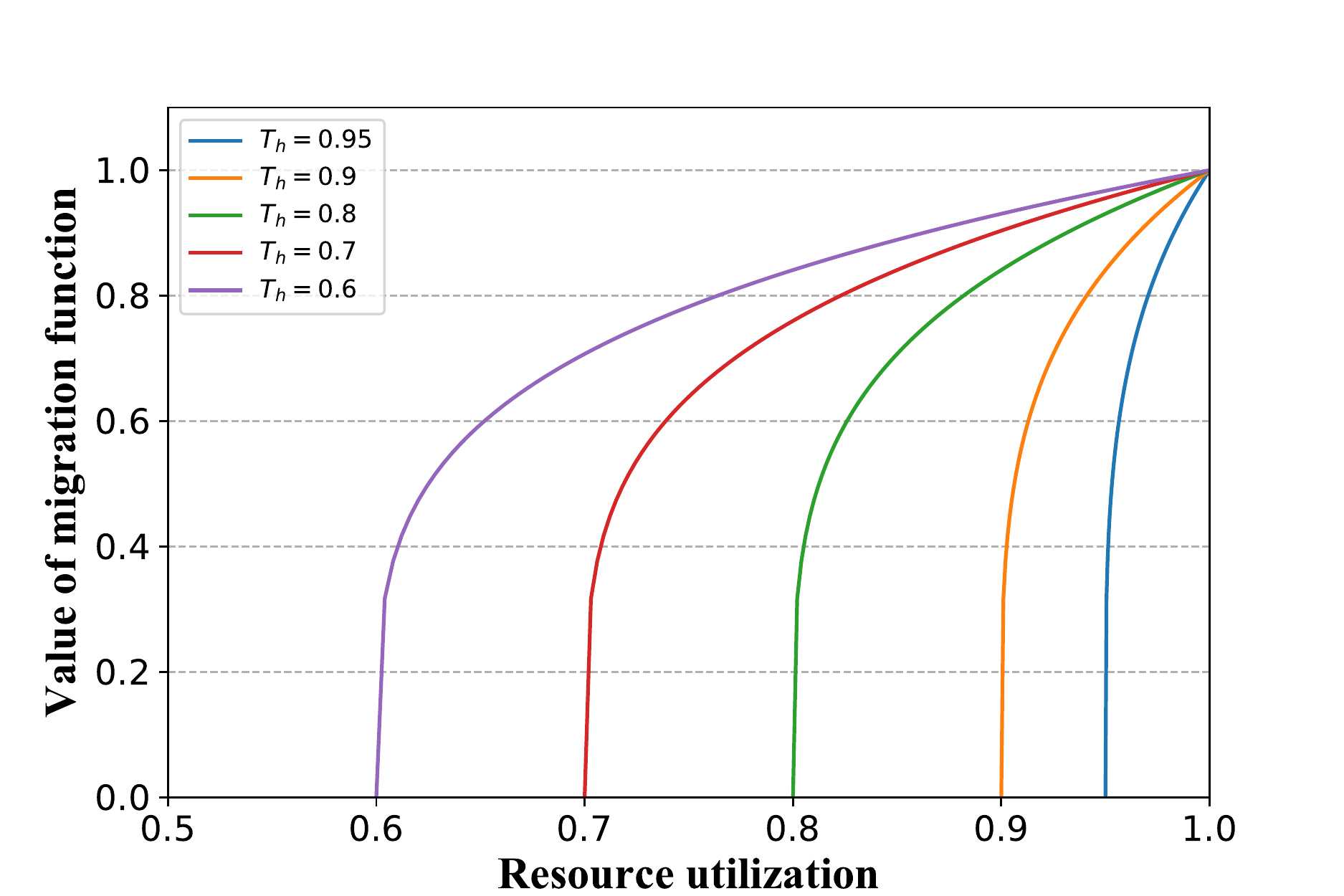}
	\caption{Migration function, $\beta=0.25$}
 	\label{fig:migrationFunction}
\end{figure}

Figure \ref{fig:migrationFunction} shows the distribution of the migration function with varied $T_h$ value. Similar to the service assignment procedure, the edge servers use the high migration function to decide whether to perform service migration. If the Bernoulli trial is successful, it means the edge server agrees to migrate the services currently running on the server to a new edge server. After the service migration decision is made, it is also necessary to select the services to be migrated, and then the algorithm performs a new round of service assignment for the selected services.

For the over-utilized edge servers, the service migration algorithm sorts all services in descending order based on resource utilization. Then we sequentially deallocate services from the server until the resource utilization of the edge server is lower than the upper threshold. For the services in the $ToMigrate$ list, the algorithm performs the service assignment procedure to assign them to new edge servers (lines 23-24).

\color{black}
\textbf{Complexity analysis of Algorithm 2:} the service migration triggered by delay violation (lines 3-9) takes a time of $O(R_m\cdot J)$, where $R_m$ is the maximum number of edge services allocated on edge servers, and $J$ is the number of edge servers; allocating the migrated services (lines 10-12, lines 23-24) takes a time of $O(R\cdot J\log(J))$, where $R$ is the maximum number of services in the whole system; the service migration triggered by over-utilized edge servers (lines 14-22) takes a time of $O(J\cdot R_m^2\log(R_m))$. Therefore, the time complexity of Algorithm 2 is $ O(R_m\cdot J+ 2R\cdot J\log(J)+ J\cdot R_m^2\log(R_m))$. 
	\color{black}

\color{black}
To be noted, our approach supports the smooth connection of services by applying service replication, which is quite similar to the VM migration process in cloud computing. When an edge service is going to be migrated, a copy of edge service will be first replicated to another edge while the original edge service is still running and connecting with mobile user. When the replicated edge service is ready, the connection will be switched from the original one to the migrated one. After the mobile user connects to the migrated edge service, the original edge service can be destroyed if no user connects to it.
\color{black}
%\color{red}Algorithm Description. \color{black}. 

% The following are the service selection methods corresponding to the two migration types:
% \begin{itemize}
%     \item Migration with low utilization: All services running on the servers are added to the list of candidate services for migration since the idle edge servers should be shut down to reduce energy consumption, 
%      \item Migration with high utilization: Sort all services in descending order based on resource utilization and sequentially add them to the candidate list until the resource utilization of the edge server is lower than the upper threshold.
% \end{itemize}

% the migration costs need to be considered when assigning a new edge server.
% the original edge server to the new server. 

\begin{algorithm2e}[h]
    \caption{Service Migration Algorithm}
    \KwIn{Mobile User $U$, Edge Service $S$, Edge Server List $ES\_List$, Delay Threshold $T_d$}

    $ToMigrate = []$  \\\vspace{10pt}

    // (1) Delay violates the threshold.\\\vspace{5pt}
    % (1) 将不能满足时延要求的服务进行迁移

    \For{$ES_i\ in\ ES\_List$}{
        \For{$S_j\ in\ ES_i.Service\_List$}{
            $U \leftarrow S_j.user$
            
            $delay \leftarrow getDelay(U, S_j, ES_i)$ according to Eq. (2)
            
            \If{$delay >= T_d$}{
                $ToMigrate \leftarrow ToMigrate \cup S_j$
                
                Deallocate $S_j$ from $ES_i$
            }
        }
    }

    \For{$S_i\ in\ ToMigrate$}{
        $ServiceAssignment(S_i)$ by using Algorithm 1
    }
    $ToMigrate.clear()$ \\\vspace{10pt}

    // (2) Migrate services from over-utilized edge servers.\\\vspace{5pt}
    % $ExcludeESList = []$

    \For{$ES_i\ in\ ES\_List$}{
        \If{$overUtilized(ES_i) == true$}{
            $result \leftarrow getMigrationDecision(ES_i)$ based on Eq. (9)
            
            \If{$result == Accept$}{
                
                % $ExcludeESList.add(ES_i)$
                % $ES_i.serviceList().sortUtilization()$
                Sort $ES_i.Service\_List$ by CPU Utilization
                
                \While{$overUtilized(ES_i) == true $ }{
                    $S \leftarrow getService(ES_i)$
                    
                    $ToMigrate \leftarrow ToMigrate \cup S$
                    
                    Deallocate $S$ from $ES_i$
                }
            }
        }
    }
    % \ForEach{Service$s_i\ in\ ServiceCandidate$}{
    %     $Assign(s_i, ExcludeESList)$
    % }
    \For{$S_i\ in\ ToMigrate$}{
        $ServiceAssignment(S_i)$ by using Algorithm 1
    }
\end{algorithm2e}

\subsection{PDMA Competitive Analysis} 
We apply competitive analysis to analyze our proposed approach based on probabilistic management for services on edge servers. 
We assume that there are $J$ heterogeneous edge servers, and $R$ heterogeneous services. The communication time between the user and edge server from the original connection and new connection (after migration) is denoted as $t_c$ and $t_c'$. The corresponding connection cost per unit time are denoted as $C_e$ and $C_e'$. The processing time of the original edge server is $t_p$, and the processing time of the migrated edge server is $t_p'$. The processing cost per unit time for the original edge server and migrated edge server are denoted as  $C_p$ and $C_p'$. Let $t_m$ be the migration time and $C_m$ be the migration cost per unit time. Without loss of generality, we can define $t_cC_e = 1$, $t_pC_p = \varepsilon$ and $t_mC_m = \delta$.  Let $\tau$ be the times of migration that happens during the observation time. 
\begin{theorem}	
	The upper bound of the competitive ratio of PDMA algorithm for the edge service migration is $\frac{PDMA(U)}{OPT(U)} \leq 1 + \frac{(2 + \varepsilon + \delta) JR}{(1+ \varepsilon + \delta)(J+R)}$.
\end{theorem}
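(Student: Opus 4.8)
The plan is to bound $PDMA(U)$ and $OPT(U)$ separately by decomposing each into its communication, processing, and migration components and then taking the ratio. Using the normalizations $t_c C_e = 1$, $t_p C_p = \varepsilon$, and $t_m C_m = \delta$, I would first observe that PDMA only migrates a service when the move is locally justified (the delay threshold is violated, or the hosting edge server is over-utilized), so the post-migration communication and processing costs satisfy $t_c' C_e' \le t_c C_e = 1$ and $t_p' C_p' \le t_p C_p = \varepsilon$. Consequently, the cost chargeable to one PDMA migration cycle — the communication on the old connection, the migration itself, and the communication and processing on the new connection — is at most $1 + \delta + 1 + \varepsilon = 2 + \varepsilon + \delta$, which is exactly the numerator constant appearing in the claimed bound.

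Next I would establish a matching lower bound on the optimum. Any feasible schedule, the offline optimum included, must pay to run the service ($\varepsilon$ per unit of processing) and to reach it ($1$ per unit of communication), and whenever the user's mobility forces a reconfiguration it must pay for at least one migration; this gives $OPT(U) \ge (1 + \varepsilon + \delta)$ per unavoidable reconfiguration, and globally $OPT(U) \ge (1+\varepsilon+\delta)\,\tau_{\mathrm{OPT}}$ together with the ever-present $(1+\varepsilon)$ baseline that PDMA also incurs. Writing $PDMA(U) = (\text{baseline}) + (\text{migration overhead})$ and dividing through by $OPT(U)$, the baseline contributes the leading $1$ in the stated bound, and the remaining task is to control the ratio of PDMA's total migration overhead to $OPT(U)$.

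The crucial step, and the one I expect to be the main obstacle, is bounding the number of migrations $\tau$ that PDMA performs and relating it to $J$ and $R$. Here I would exploit the Bernoulli-trial structure of Algorithms~1 and~2: a service is re-placed only when the assignment function $f(x,p,T)$ or the high-migration function $f^h_m$ returns an "accept", and the shape of these functions forces the success probability to be small when utilization is near the threshold, so in expectation migrations cannot cascade. With $R$ services that may be redistributed among $J$ servers, an amortization argument over (service, server) incidences — a server can usefully absorb only so many services, and a service can visit only so many servers before $OPT$ would also be forced to move — should bound the effective number of PDMA migration cycles by a quantity proportional to $\frac{JR}{J+R}$, the harmonic-type term reflecting this two-sided capacity constraint. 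Combining the per-cycle cost $2+\varepsilon+\delta$ from the first step, the per-reconfiguration lower bound $1+\varepsilon+\delta$ on $OPT$ from the second step, and this counting bound, routine algebra yields $\frac{PDMA(U)}{OPT(U)} \le 1 + \frac{(2+\varepsilon+\delta)JR}{(1+\varepsilon+\delta)(J+R)}$. I expect the delicate points to be making the "migrations do not cascade" claim precise, since it rests on the exact analytic form of the probability functions, and justifying the $\frac{JR}{J+R}$ counting bound rather than settling for a cruder $\min(J,R)$ or $JR$ estimate.
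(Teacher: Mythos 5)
Your decomposition of the per-migration cost into $1+\delta+1+\varepsilon = 2+\varepsilon+\delta$ and the per-unit lower bound $1+\varepsilon+\delta$ on the optimum both match the constants in the paper, and your use of $t_c'C_e' \le t_cC_e$ and $t_p'C_p' \le t_pC_p$ (migration only happens when the new placement is no worse) is exactly the inequality the paper invokes. However, there is a genuine gap at the step you yourself flag as the main obstacle: the origin of the factor $\frac{JR}{J+R}$. You propose to obtain it by bounding the \emph{number of migrations} $\tau$ via the analytic form of the Bernoulli acceptance functions and an amortization argument showing migrations do not cascade. That is not how the bound arises, and making such a probabilistic/amortized argument rigorous would be substantially harder than the theorem requires. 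In the paper's argument, $\tau$ (the number of migration rounds) appears as a common factor in both $PDMA(U)$ and $OPT(U)$ and simply cancels in the ratio; no bound on $\tau$ is ever needed. The quantity $\frac{JR}{J+R}$ is instead a per-round pigeonhole bound on the number of \emph{overloaded servers}: under balanced allocation each of the $J$ servers hosts $R/J$ services, a server is QoS-violated only if it hosts at least $R/J+1$ services, and hence at most $J_o = \lfloor R/(R/J+1)\rfloor = \lfloor JR/(J+R)\rfloor$ servers can be violated simultaneously. The numerator of the bound then comes from charging each of these $J_o$ servers an extra communication, processing, and migration cost on top of the baseline paid by all $J$ servers, giving $1 + \frac{J_o}{J} + \frac{J_o}{(1+\varepsilon+\delta)J} = 1 + \frac{(2+\varepsilon+\delta)J_o}{(1+\varepsilon+\delta)J}$, after which $J_o \le \frac{JR}{J+R}$ yields the stated ratio (note the paper drops the remaining $1/J$ factor when passing to the final expression). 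Without this pigeonhole counting of violated servers, your plan has no concrete route to the harmonic term, so the proposal as written does not close.
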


\begin{proof}
	Under the normal status, the number of services deployed on edge servers is ${R}/{J}$, while in QoS violated or overloaded situation, at least ${R}/{J}+1$ services are deployed to a single edge server. Thus, the maximum number of QoS violated nodes is $J_o = \lfloor\frac{R}{{R}/{J}+1}\rfloor$, which is  equivalent to $J_o = \lfloor{R/J+R}\rfloor$.

For a set of users $U$, the optimal offline algorithm for problem only keeps the services on edge servers and migrates minimum services, thus the total cost of an optimal offline algorithm is defined as:
\begin{equation}
	OPT(U) = \tau(t_cC_eJ+t_pC_pJ+t_mC_mJ).
\end{equation}

For our proposed approach, the total cost with migration can be defined as below:
\begin{equation}
%\footnotesize
PDMA(U) = \tau\{t_cC_e(J+J_o) + t_c'C_e'J_o + t_pC_pJ + t_p'C_p'J_o + t_mC_m(J+J_o)\}.
\end{equation} 

According to our proposed approach, the communication cost that user connect with the migrated node should be no more than the orignal node, thus $ t_c'C_e' \leq t_cC_e\ $. And the processing cost of migrated node is no more than the orignal node, thus $t_p'C_p' \leq t_pC_p$. Then we have:
\begin{equation}
PDMA(U) \leq \tau\{t_cCe(J+2J_o) + t_pC_p(J+J_o) + t_mC_m(J+J_o)\}.
\end{equation}
Therefore, the competitive ratio of an optimal deterministic algorithm as: 
\begin{equation}
%\small
\begin{split}
\frac{PDMA(U)}{OPT(U)} & \leq \frac{\tau\{t_cCe(J+2J_o) + t_pC_p(J+J_o) + t_mC_m(J+J_o)\}}{\tau(t_cC_eJ+t_pC_pJ+t_mC_mJ)} \\
& = \frac{t_cC_eJ+t_pC_pJ+t_mC_mJ + (2t_cC_e+t_pC_p+t_mC_m)J_o}{t_cC_eJ+t_pC_pJ+t_mC_mJ} \\
& = 1 + \frac{(t_cC_e+t_pC_p+t_mC_m)J_o + t_cC_eJ_o}{(t_cC_e+t_pC_p+t_mC_m)J} \\
& = 1 + \frac{J_o}{J} +\frac{t_cC_eJ_o}{(t_cC_e+t_pC_p+t_mC_m)J}.
\end{split}
\end{equation}
 As $J_o = \lfloor\frac{JR}{J+R}\rfloor$, we have $J_o \leq \frac{JR}{J+R}$
The competitive ratio is defined as: 
\begin{equation}
\begin{split}
\frac{PDMA(U)}{OPT(U)} & \leq 1+ \frac{J_o}{J} + \frac{J_o}{(1+ \varepsilon + \delta) J} \\
& = 1 + \frac{(2 + \varepsilon + \delta) J_o}{(1+ \varepsilon + \delta)J} \\
& \leq 1 + \frac{(2 + \varepsilon + \delta) JR}{(1+ \varepsilon + \delta)(J+R)}.
\end{split}
\end{equation}
\end{proof}

\begin{figure}[ht]
	\centering
	\includegraphics[width=0.45\textwidth]{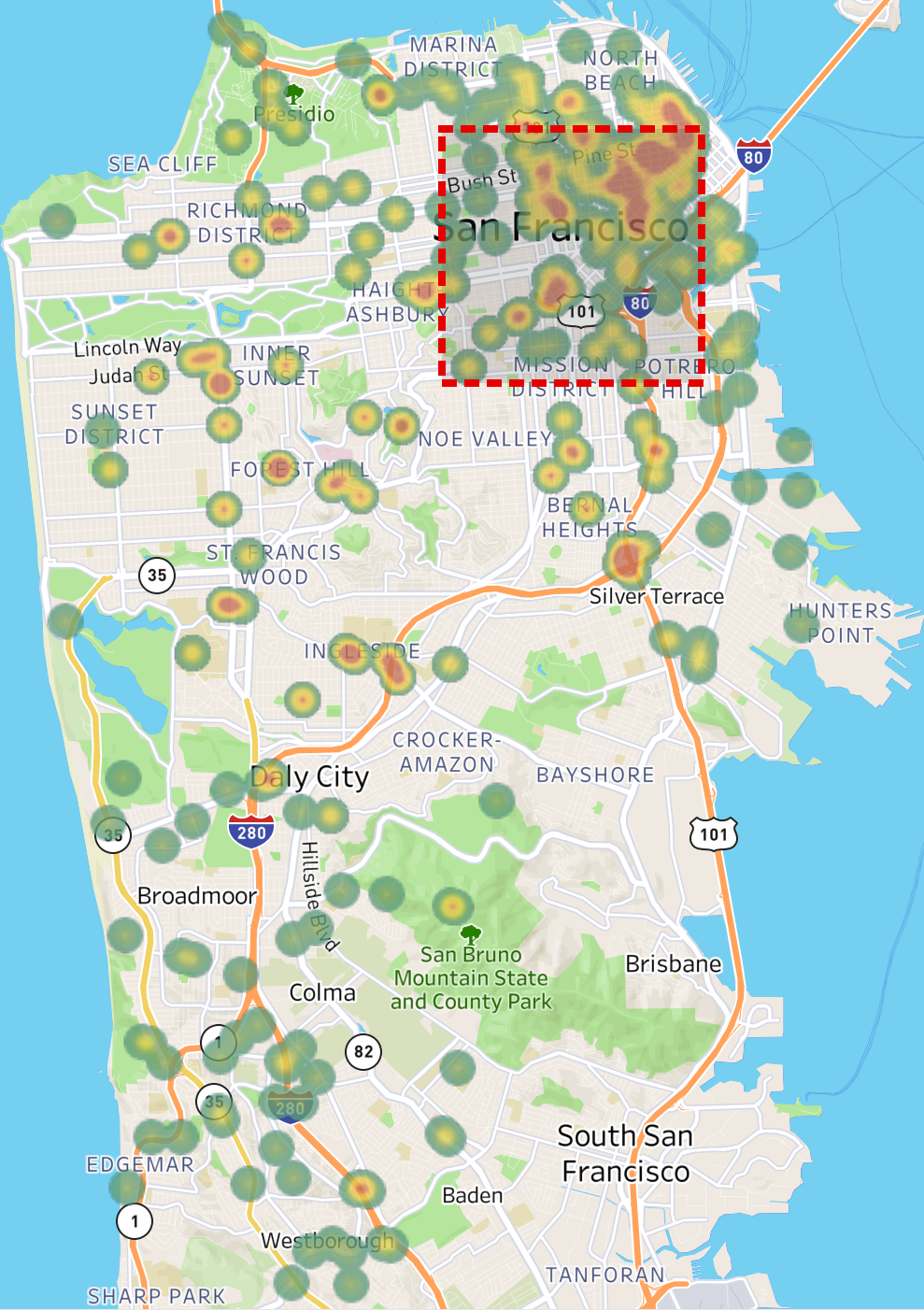}
	\caption{Base station dataset}
 	\label{fig:data_baseStation}
\end{figure}

\begin{figure}[ht]
	\centering
	\includegraphics[width=0.45\textwidth]{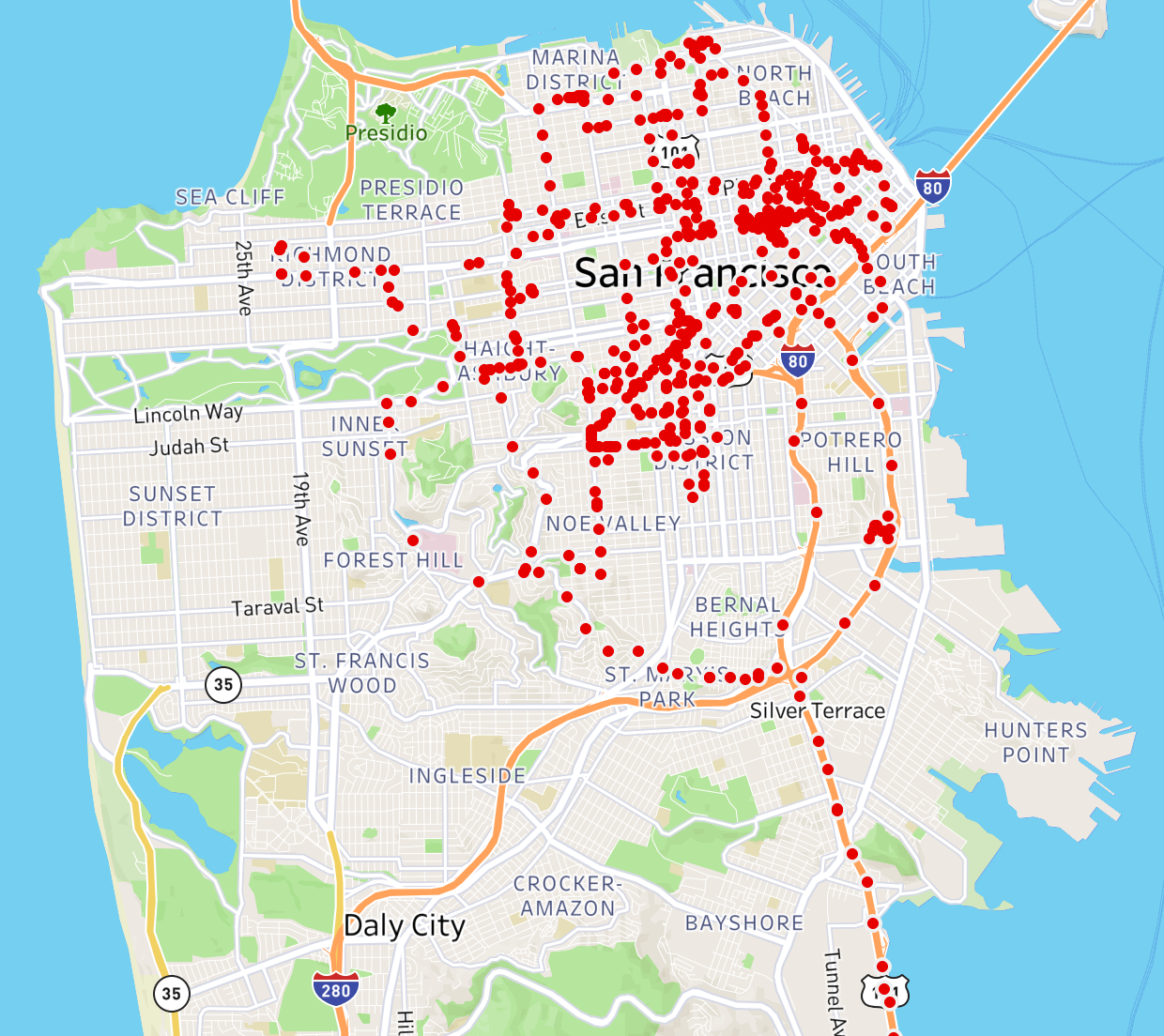}
	\caption{Taxi traces}
 	\label{fig:data_taxi&BS}
\end{figure}

%\section{Metrics}

\section{Performance Evaluations}

To evaluate algorithm performance, we simulate the service migration scenario in MEC based on iFogSim \cite{gupta2017ifogsim} and conduct experiments with several baselines. %To validate the performance, we also made a comparison with several baselines. 
To carry out the experiments, three datasets are utilized for our experiments, including (1) the location of the base stations, (2) the mobility traces of users, and (3) the workload data of edge services. We will first introduce these three datasets in this section, and then explain the configurations and procedures of our experiments. At the end of the section, we will present the evaluation results of our algorithms.

\subsection{Datasets Description}
We use three real-world datasets mentioned above to carry out our experiments.

First, we get the base station dataset from antenna distribution dataset \cite{antennasearch}, which consists of the location information of 422 base stations in San Francisco. Figure \ref{fig:data_baseStation} shows the distribution of the base station dataset. It can be observed that the density of edge servers varies in different areas, e.g., more base stations are deployed in the central business district. 
%of which we selected 200 base stations via the K-means clustering to deploy edge clouds in our experiments. Figure\ref{fig:data_baseStation} shows the base station dataset.

To simulate a real-world MEC scenario, we obtained realistic mobility traces of 536 taxis in San Francisco \cite{crawdad}. The dataset records the locations of each taxi (represented by the latitude and longitude) every 60 seconds on May 31, 2008. Each taxi in this dataset acts as a mobile user in our simulation and runs one mobile service that needs communications with edge servers while traveling around the city. The distance between base stations and mobile users can be calculated by Euclidean distance. This combines the first two datasets and also helps us to simulate an Internet of Vehicles (IoV) \cite{Ding2019} scenario. Figure \ref{fig:data_taxi&BS} depicts the trace of one taxi in the whole day. We can notice that the location of the taxi can be changed significantly during the day, which also demonstrates the need for service migration to support the user in a delay-aware and mobility-aware manner. 

%\subsubsection{PlanetLab traces}
We also simulate the workload of each service running on the edge servers based on the dataset derived from PlanetLab workload~\cite{park2006comon} that includes CPU utilization data of thousands of VMs allocated to servers. We utilize the CPU utilization of VMs to represent the CPU utilization of edge services. The utilization of edge servers will be influenced by the utilization of edge services deployed on them.   

\subsection{Rush Hour Simulation}
% 如今，人们不断涌入大城市。城市内人口十分集中，尤其是上下班高峰期，上班族在特定时间段内涌入城市内某些道路、区域，导致十分严重的交通拥堵。此时，该区域内的边缘服务器可能出现负载过高的情况。如果不对边缘服务进行良好的调度，服务的延迟将会大大增加，进而影响用户的服务体验。所以，针对高峰期的服务调度是十分关键的。

% 我们在实验中对数据集在空间、时间上进行限制，希望模拟出城市上下班高峰期的场景，并在该场景中验证我们的算法的有效性。
Nowadays, the population in the large city is quite concentrated, especially during the rush hour, e.g., 8:00 am to 9:00 am on the weekday's morning. When a large volume of mobile users rush into certain roads and areas in the city, resulting in serious traffic congestion. During the rush hour, the edge servers in the crowded areas are more prone to be overloaded compared with non-rush hours. If the edge services are not properly scheduled, the delay of the services will be greatly increased, which will affect the quality of experience of users. 
Therefore, attention should be paid to the edge service management in rush hour.

To simulate the service scheduling in rush hours, based on the original workloads, we select a period of time as rush hour and a crowded area to simulate the scenario of the city during the rush hour, and evaluate the effectiveness of our scheduling algorithm.
We first utilize the K-means clustering algorithm \cite{WANG2019JPDC} on the mobility traces of taxis to select a location with the highest density of taxis. Then, we use this location as the center to frame a square ($4km \times 4km$) as the congested area in the rush hour. We then extract the data of 147 base stations in this area from the whole dataset. The red square in Figure~\ref{fig:data_baseStation} shows the selected area, which represents a much more dense distribution of mobile users than other areas. 
Afterwards, to choose the rush hour, we count the number of taxis in this area in different time periods and pick three hours of May 31, 2008 with the maximum number of taxis. After that, we extract the mobility traces of taxis moving in this congested area during the rush hour. 

For this part of the experiment, we generate new workload traces for edge services derived from the original PlanetLab dataset. %The CPU utilization of edge services within each scheduling interval is randomly generated in the range 70\% to 100\%, 
As the original resource utilization is low in PlanetLab dataset, we consider multiple edge services are connected so that they should be deployed together, in order to increase the resource utilization of edge servers and thus simulating the heavy workloads during the rush hour. This assumption conforms to the motivation of microservice architecture~\cite{Xu2020TSUSC} that can be applied to the MEC environment. 
Based on the above steps, we can perform simulations for the rush hour scenario, and the results will be demonstrated in the following sections.

\subsection{Experiment Configurations}
We conducted all our experiments on the same computer with iFogSim. The experimental configurations are as below:

For Eq.~(\ref{eq:transRate}), we set the channel bandwidth $W$ to be 20 Mhz and transmitted power of taxi $S_p$ to be 0.5W, and the noise power $N_p$ to be $2 \times 10^{-13}$ W. Besides, the wireless channel gain $g$ is set as $127+30 \log{d}$. We generate the delay matrix at every scheduling interval, and the delay of each link $m_{i,j} \in M$ is randomly generated between 5ms and 50ms.

For the configurations of edge servers, each server has 8 CPU cores with Millions of Instructions Per Second (MIPS) of 2000, 3000, and 4000, 80GB of RAM, and 10TB of storage. Each edge service is randomly configured to request 1000, 1500, 2000, 2500 of MIPS, respectively, and 8 GB of RAM.
The instructions of the task executed by each edge service are configured as 60 million.

For the algorithms, the scheduling interval is configured to 60 seconds. The delay threshold of PDMA is configured as 75ms. To evaluate the performance of our service migration algorithm, we focus on two metrics to evaluate the performance, including the \textit{Migration Cost} and the \textit{Overall Delay} based on the optimization objectives. In addition, we also record the \textit{Number of Overloaded Servers} to evaluate the overloaded situation, especially for the rush hours. The descriptions of the metrics are as below: 

\begin{itemize}
\item \textbf{Overall delay} represents the average delay of all services during the experiments as represented in Eq. (\ref{eq:overallDelay}). 

\item \textbf{Migration cost} is the sum of the cost of all service migrations as represented in Eq. (\ref{eq:migrationCost}). In our simulation, the function $F$ calculates the distance between the source server and destination server.

\item \textbf{Average number of overloaded edge servers} is applied to evaluate the effect of algorithms to relieve the overloaded situation. The overloaded hosts are identified based on the predefined utilization threshold. {In our experiments, we set the utilization threshold as 0.9, as this value has been used widely to identify overloaded hosts in data centers.}
\end{itemize}

We also compare our approach with three scheduling algorithms.
\begin{itemize}
    \item{\textbf{Nearest edge server first (NF)}}: it assigns the edge service to the edge server closest to the mobile user based on distance, which can reduce communication costs.
    
    \item{\textbf{Never migrate (NM)}}: it never migrates the edge services, thus the migration costs can be reduced.

     \item{\textbf{Top-K}}: it sorts all the edge servers by their CPU utilization and selects one randomly from the top K busiest servers to host a new coming or migrated edge service randomly. Here, $K$ is configured as $0.1\times J$, where $J$ is the size of edge servers.
     
     {\color{black}\item{\textbf{CHERA}\cite{zhao2019optimal}}: it is a clustering-based heuristic algorithm for edge resource allocation. It adopts a clustering procedure to allocate applications to suitable edge servers and minimizes the average service response time.}
\end{itemize}

%We conducted several rounds of experiments to evaluate all these algorithms and made a comparison between them.
%We  also adopt three metrics to evaluate the algorithm performance, including $Overall\ Delay$, $Migration\ Cost$  and $Average\ Number\ of\ Overloaded\ Servers$.

\subsection{Experiments and Results}

We divide the experiments into two parts, including the experiments on PlanetLab traces and the experiments on Rush Hour traces. In each part, we investigate two parameters on algorithms performance. The first one is $distance\ threshold$ to represent the coverage of base stations. If the distance between a mobile user and a base station exceeds the distance threshold, the user will not be able to access the base station and thus the service connection should be switched to another base station, which will affect the service delay. 
The second one is \textit{ratio of clients and servers number}  to demonstrate the scalability of scheduling algorithms. Since the PlanetLab traces specify the number of clients, we reduce the number of edge servers to modify the ratio. And, for the Rush Hour simulations, we increase the volume of mobile users in the crowded area to simulate the scenario.

\subsubsection{Results with PlanetLab Traces}
We first present the experiment results based on PlanetLab traces by varying the coverage of the base station and number of edge servers.

\noindent \textbf{(a) Varied coverage of base stations}

We configure the number of mobile users to be 1000 and vary the distance threshold from 200m to 2000m. As shown in Figure~\ref{fig:PlanetLab_PerformanceVaryingThreshold}, the distance threshold has a slight impact on the three metrics. Figure~\ref{fig:PlanetLab_distanceThreshold_delay} shows the results of overall delay.
\color{black} {\color{black}CHERA achieves close delay with} PDMA and performs lower delay than NF and others. The overall delay of PDMA is lower than 62ms when the coverage of the base station is longer than 200m. {\color{black}The migration cost of PDMA is much lower than that of CHERA}, indicating that PDMA can decrease the communication overhead during the service migration. Since the CPU utilization of PlanetLab and the ratio of mobile clients to edge servers is low, very few edge servers become overloaded during the simulation.
\color{black}
\noindent \textbf{(b) Varied number of edge servers}

The average volume of mobile users in PlanetLab traces is 1000. To modify the ratio of clients and servers number, we configure the number of edge servers to be from 100 to 400. 
As shown in Figure~\ref{fig:PlanetLab_PerformanceVaryingHostNum}, we can notice that PDMA is able to maintain good performance when there are fewer servers. 
\color{black}{\color{black} When there are 100 edge servers, the overall delay of PDMA is close to that of CHERA and is 13.5ms lower than that of NM.}
{\color{black} The migration cost of PDMA is only 4.3\%, 15.8\% and 30.3\% compared with Top-K, CHERA and NF, respectively. For the number of overloaded edge servers, less than 5 servers are over-utilized when using PDMA, while CHERA can lead to more than 12.}

% We fix the distance threshold to 1000 and change the number of mobile users from 400 to 2000.

\begin{figure*}[t]
	\centering
	\begin{subfigure}{0.33\linewidth}
		\centering
		\includegraphics[width=0.99\linewidth]{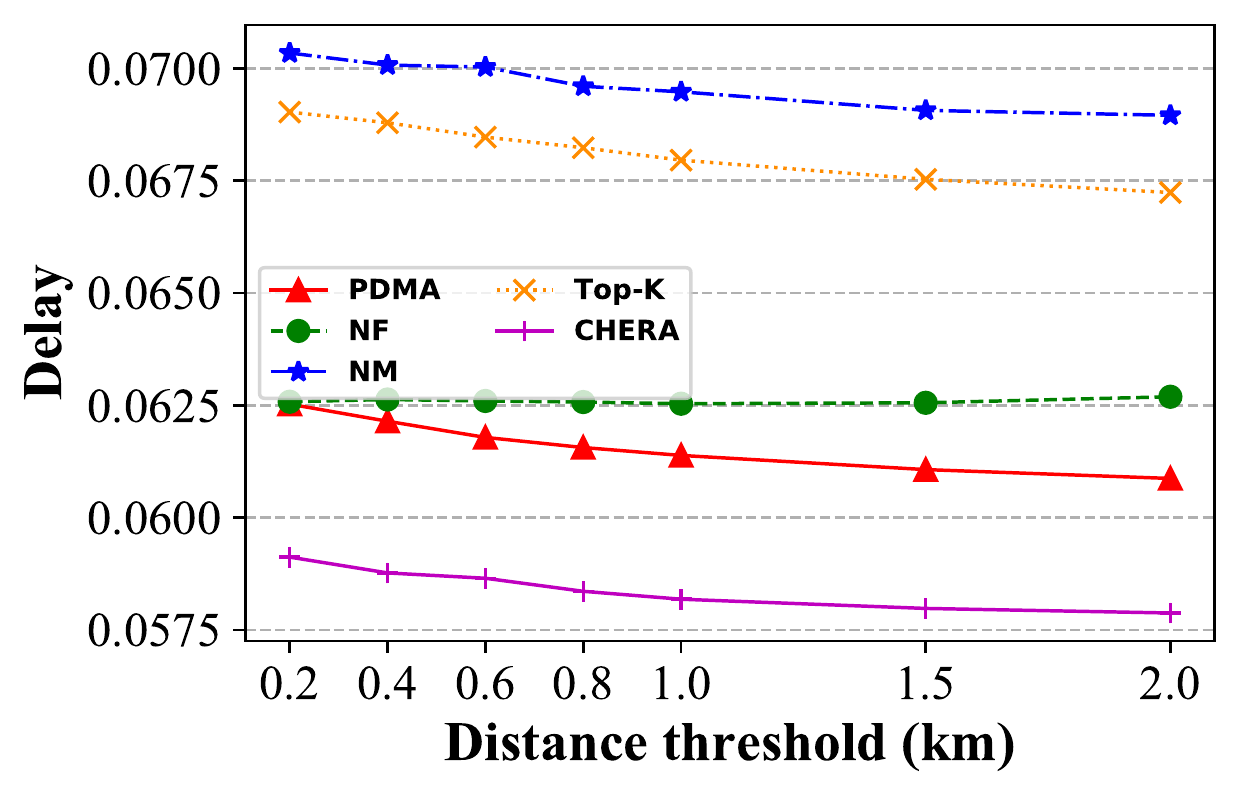}
		\captionsetup{font={large}} 
		\caption{Overall delay}
		\label{fig:PlanetLab_distanceThreshold_delay}
	\end{subfigure}
	\begin{subfigure}{0.33\linewidth}
		\centering
		\includegraphics[width=0.99\linewidth]{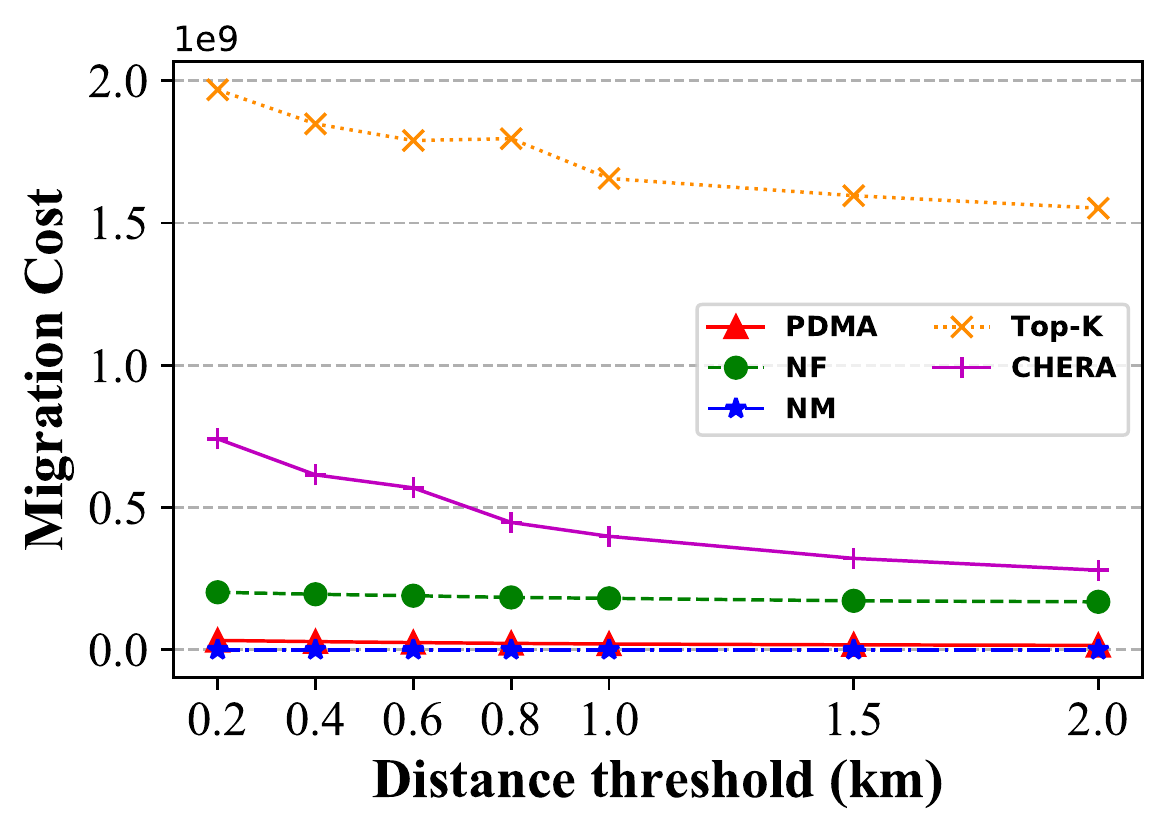}
		\captionsetup{font={large}} 
		\caption{Migration cost}
		\label{fig:PlanetLab_distanceThreshold_cost}
	\end{subfigure}
	\begin{subfigure}{0.33\linewidth}
	\centering
	\includegraphics[width=0.99\linewidth]{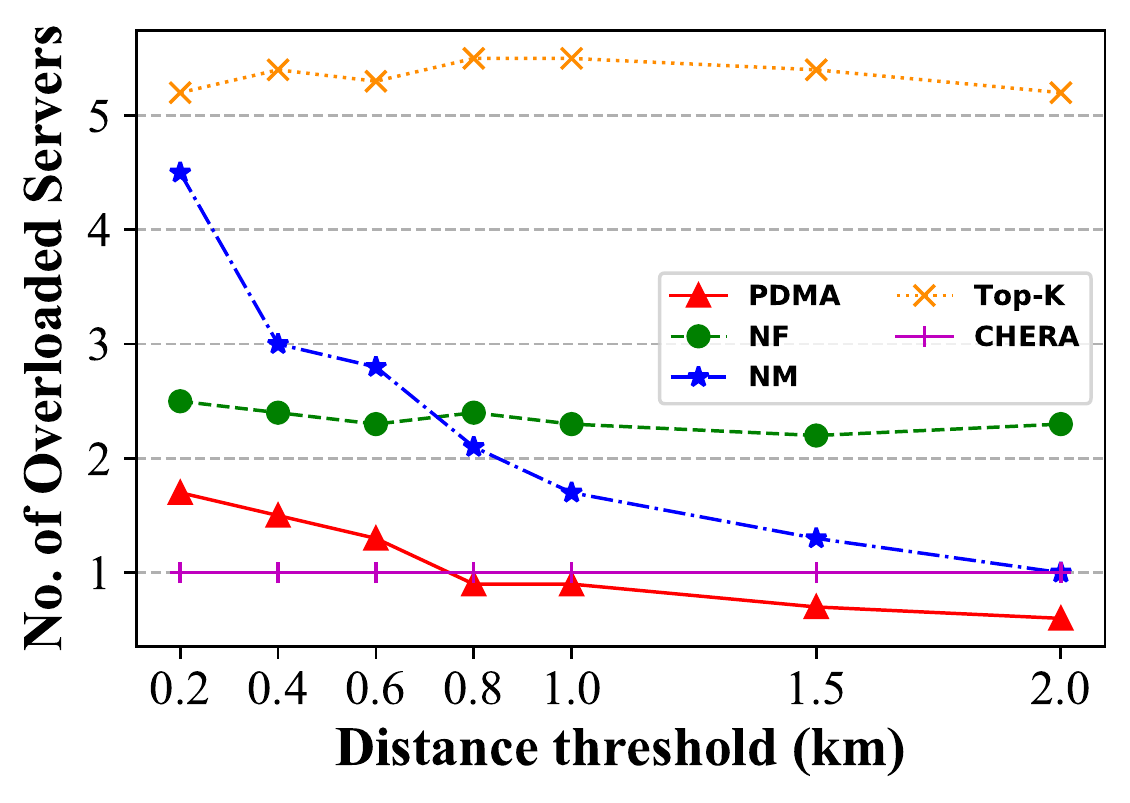}
	\captionsetup{font={large}} 
	\caption{Number of Overloaded Servers}
	\label{fig:PlanetLab_distanceThreshold_overloaded}
    \end{subfigure}
    % \captionsetup{font={large}} 
	\caption[VarPerOptCom]{PlanetLab:\ Performance comparison of algorithms with varied distance thresholds}
	\label{fig:PlanetLab_PerformanceVaryingThreshold}
\end{figure*}

\begin{figure*}[t]
	\centering
	\begin{subfigure}{0.33\linewidth}
		\centering
		\includegraphics[width=0.99\linewidth]{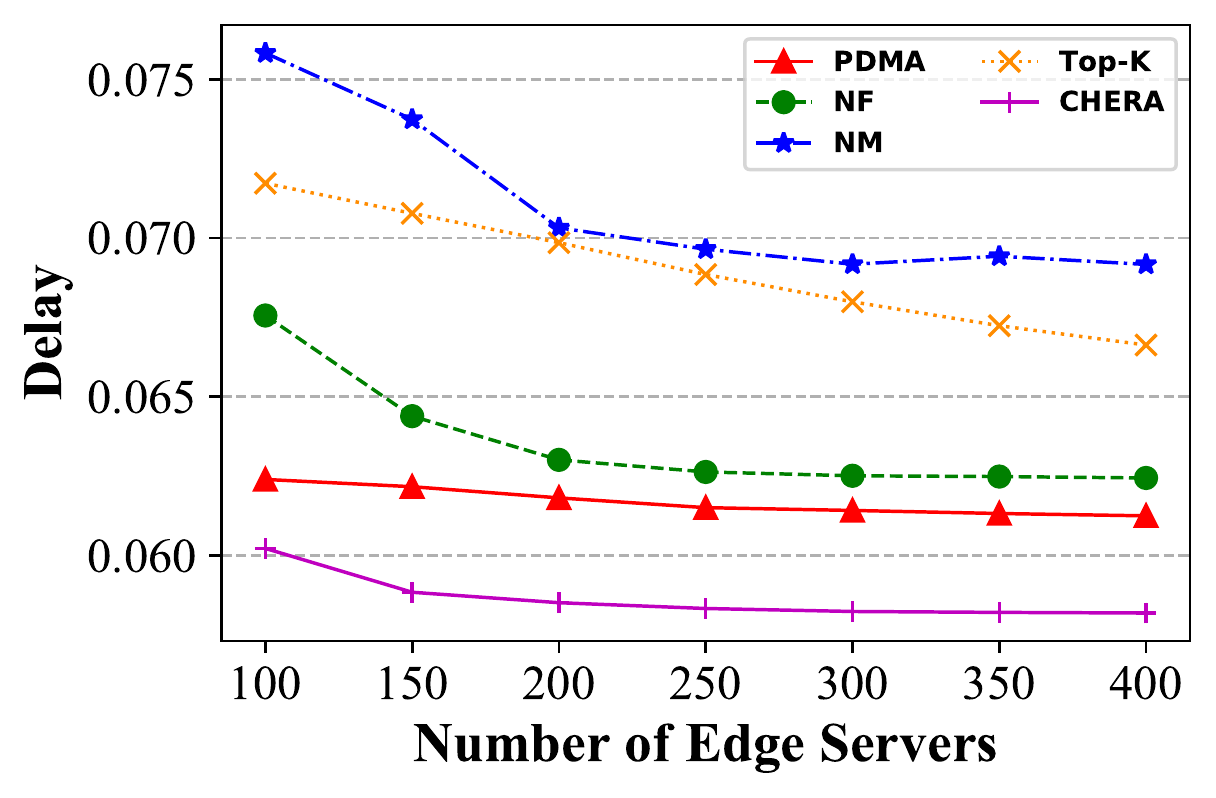}
		\captionsetup{font={large}} 
		\caption{Overall delay}
		\label{fig:PlanetLab_hostNum_delay}
	\end{subfigure}
	\begin{subfigure}{0.33\linewidth}
		\centering
		\includegraphics[width=0.99\linewidth]{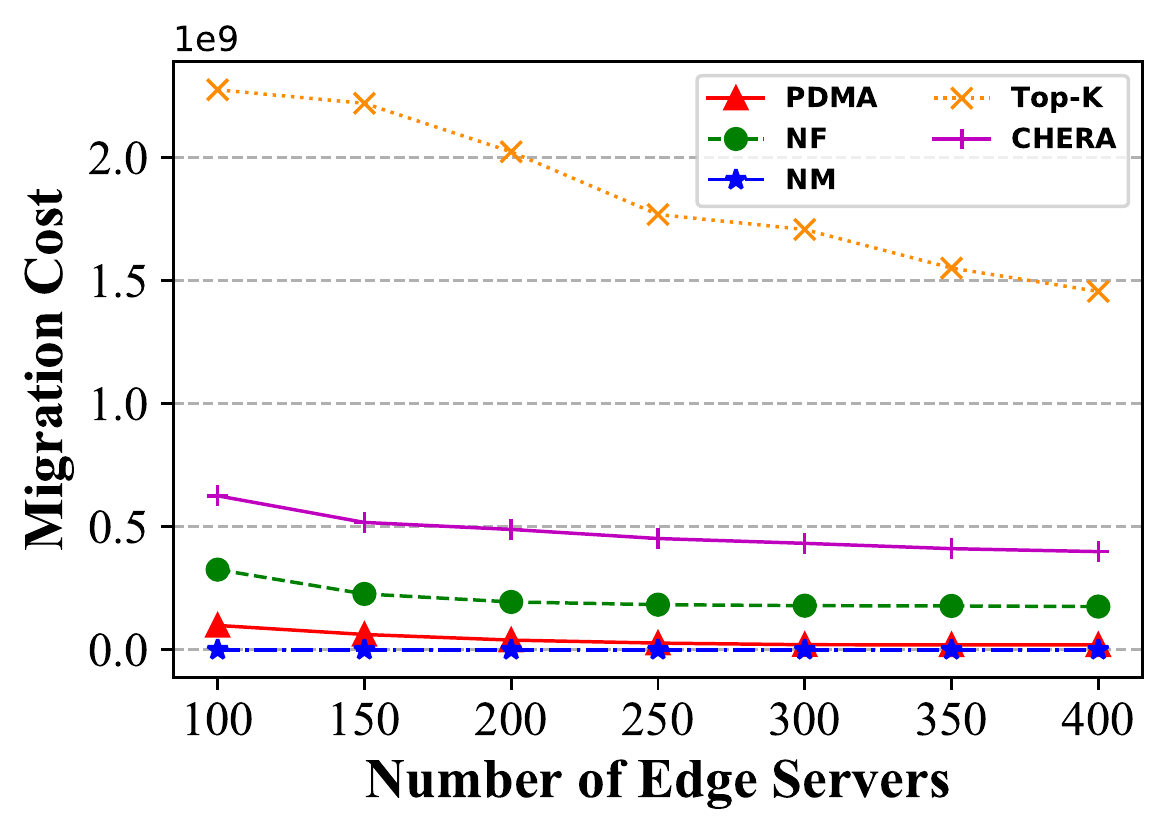}
		\captionsetup{font={large}} 
		\caption{Migration cost}
		\label{fig:PlanetLab_hostNum_cost}
	\end{subfigure}
	\begin{subfigure}{0.33\linewidth}
	\centering
	\includegraphics[width=0.99\linewidth]{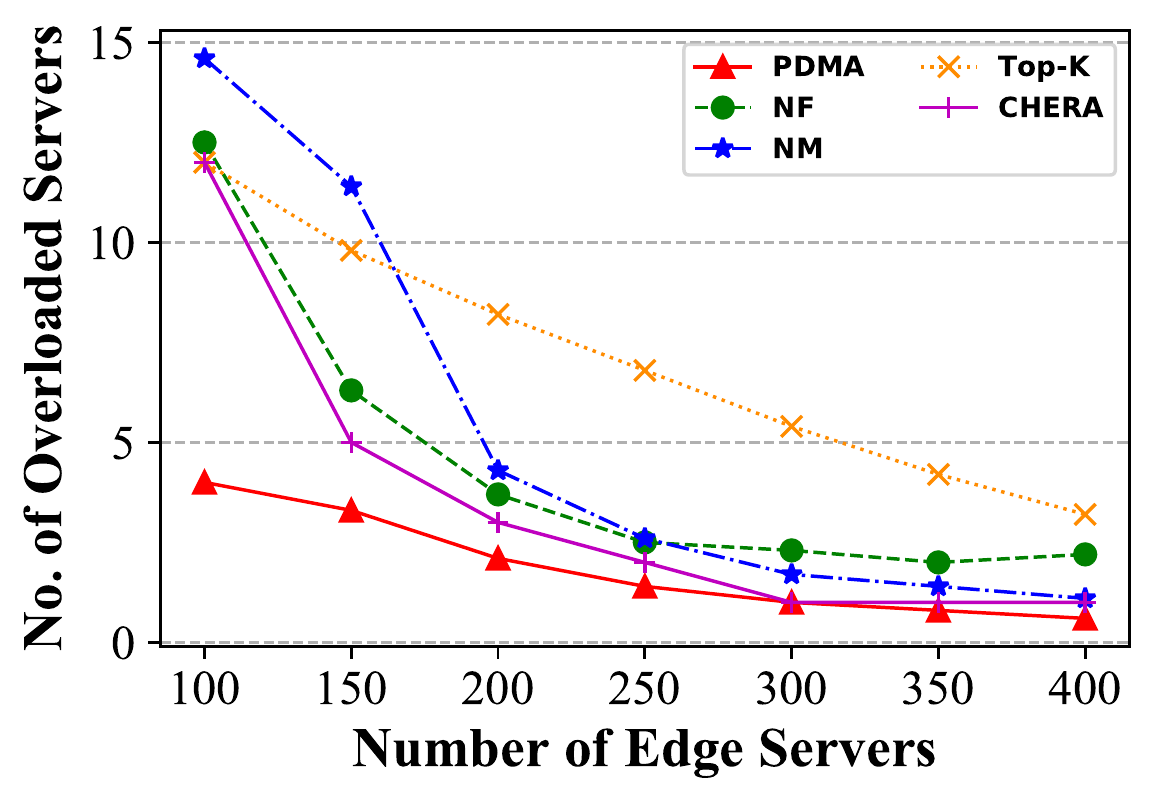}
	\captionsetup{font={large}} 
	\caption{Number of Overloaded Servers}
	\label{fig:PlanetLab_hostNum_overloaded}
    \end{subfigure}
    % \captionsetup{font={large}} 
	\caption[VarPerOptCom]{PlanetLab:\ Performance comparison of algorithms with varied number of edge servers}
	\label{fig:PlanetLab_PerformanceVaryingHostNum}
\end{figure*}

\color{black}
\subsubsection{Results with Rush Hour Traces}
We present the experiment results of rush hour traces in this part. First, we only use the data of base stations in the selected area and deploy one edge server on each base station. Then, we configure the two parameters with varied values and run several rounds of simulations.

\noindent \textbf{(a) Varied coverage of base stations}

We configure the number of mobile users to be 1000 and vary the distance threshold from 200m to 2000m. \color{black}As shown in Figure~\ref{fig:RushHour_distanceThreshold_delay}, 
%PDMA achieves the lowest overall service delay among all four algorithms when changing the distance threshold. 
{\color{black}except for NM, other algorithms can perform better on overall delay when increasing the distance threshold.} It results from the decrease in the number of service migrations, which can also be observed in Figure~\ref{fig:RushHour_distanceThreshold_cost}. The migration cost of PDMA is the closest to that of NM (the cost is zero). The NF always chooses the nearest edge server to the mobile user to migrate the service and thus leading to more migration cost. Top-K performs the worst on migration cost because the top K busiest servers may be far away from the current edge server. Figure \ref{fig:RushHour_distanceThreshold_overloaded} shows the number of overloaded servers when utilizing different algorithms. {\color{black}PDMA controls the number of overloaded servers to be less than 20 while CHERA incurs nearly 40 overloaded edge servers and other algorithms incur more than 50.} The reason is that PDMA will not select those edge servers that are likely to be overloaded and thus avoiding more overloaded situations.

\color{black}
\noindent \textbf{(b) Varied volume of mobile users}

In order to evaluate the performance of scheduling algorithms in the rush hours, we increase the number of mobile users to simulate weekday's morning when more mobile users enter the crowded area.
We fix the distance threshold as 1000m and vary the volume of clients from 200 to 1000. \color{black}As shown in Figure~\ref{fig:RushHour_clientNum_overloaded}, the higher volume of mobile clients causes more edge servers to be overloaded, leading to performance degradation and higher computation delay. Therefore, it can be noticed in Figure~\ref{fig:RushHour_clientNum_delay} that the overall delay also becomes higher when the number of clients increases. In addition, a higher service delay that exceeds the delay threshold will trigger more migrations, increasing the migration cost and the migration downtime.

When the number of clients increases to be more than 400, PDMA achieves better overall delay than NF. {\color{black}The overall delay of PDMA is 64ms, which is 2ms more than that of CHERA, and 6ms, 16ms less than that of NF and NM, respectively, when the volume of clients is 1000.} The migration cost of PDMA is also maintained at a very low level. {\color{black} For example, when there are 1000 clients, CHERA produces 4.6 times more migration cost than PDMA.} Compared with other algorithms, PDMA is able to prevent edge servers from becoming overloaded. As we can observe from Figure~\ref{fig:RushHour_clientNum_overloaded}, less than 20 edge servers become over-utilized when there are 1000 mobile clients, which {\color{black} is 55\% lower than that of CHERA and} is 75\% lower than those of NM and Top-K. {\color{black} Less over-loaded servers can reduce edge servers' energy consumption and also prevent edge servers from performance degradation during the rush hour.}
In addition, PDMA has the lowest increase rate in terms of the three adopted metrics among all algorithms, which validates the scalability of PDMA that it can be adapted to the scenarios with clients rushing into the MEC system while ensuring user experience.

\color{black}

%%%% Rush Hour %%%% 
%%%% distance threshold
\begin{figure*}[t]
	\centering
	\begin{subfigure}{0.33\linewidth}
		\centering
		\includegraphics[width=0.99\linewidth]{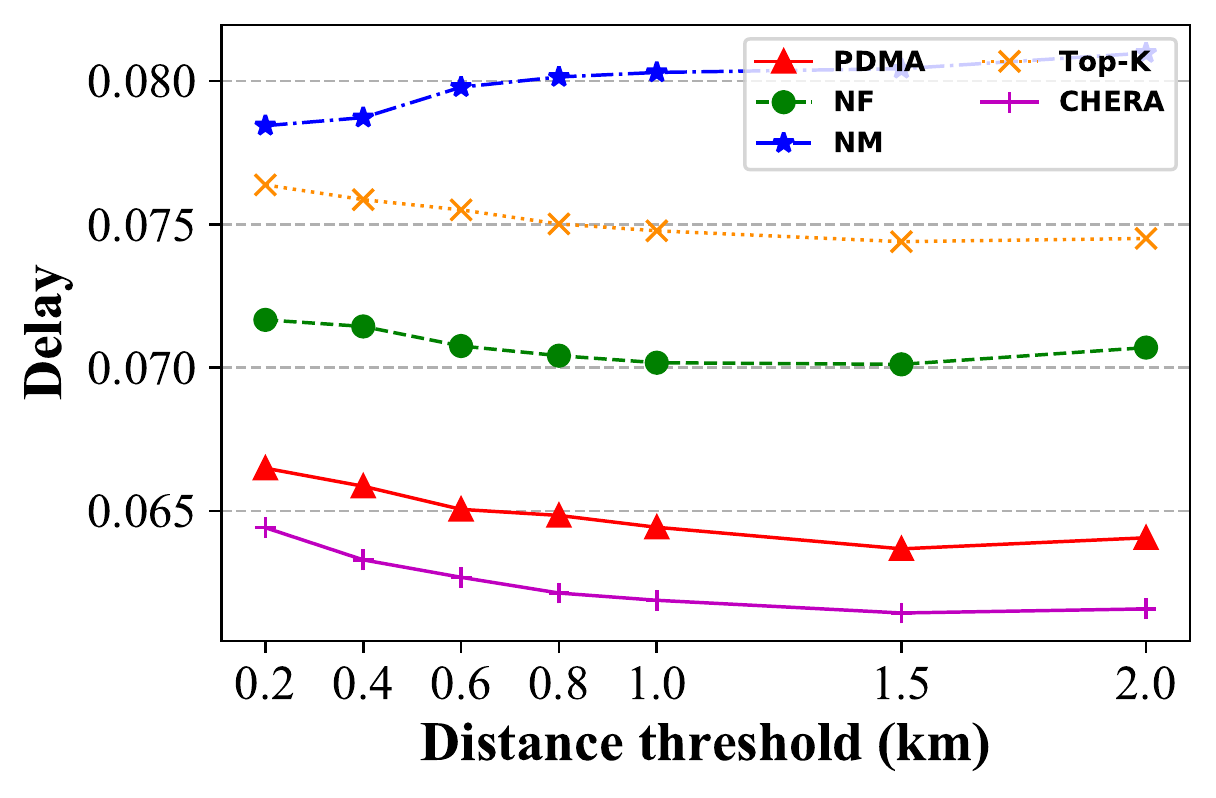}
		\captionsetup{font={large}} 
		\caption{Overall delay}
		\label{fig:RushHour_distanceThreshold_delay}
	\end{subfigure}
	\begin{subfigure}{0.33\linewidth}
		\centering
		\includegraphics[width=0.99\linewidth]{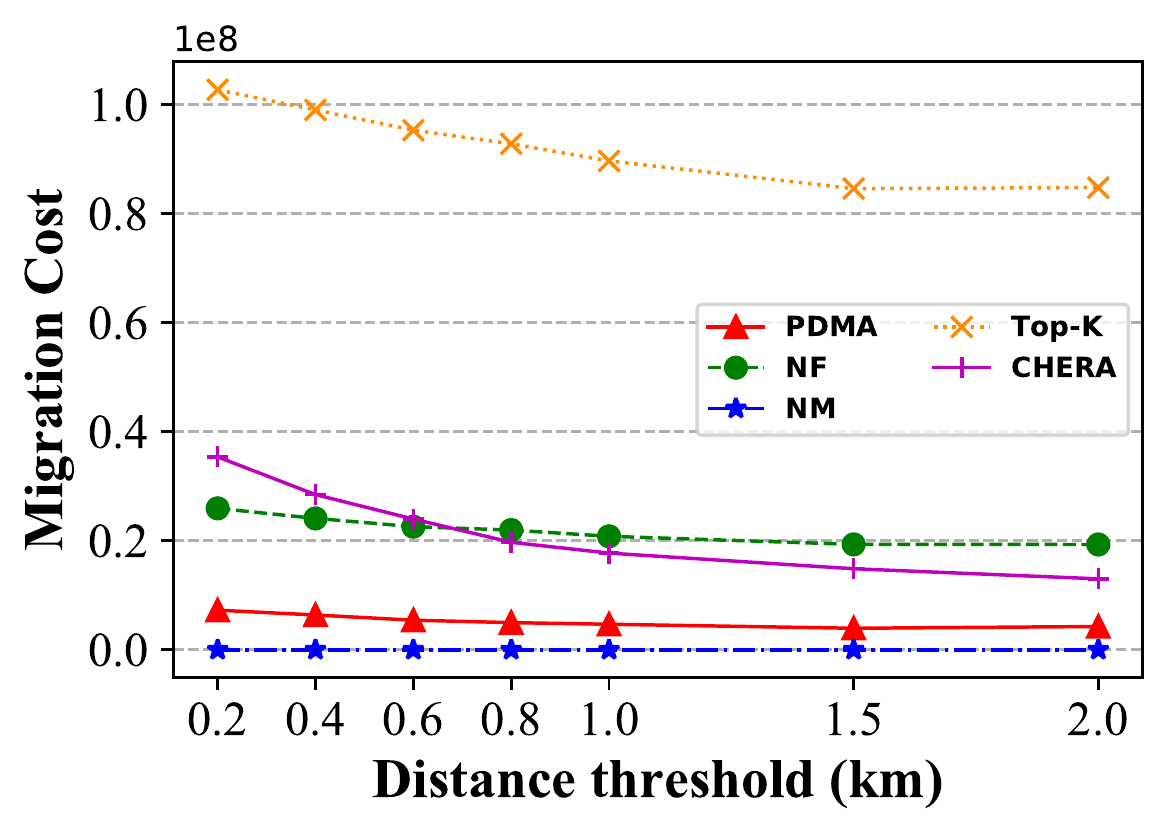}
		\captionsetup{font={large}} 
		\caption{Migration cost}
		\label{fig:RushHour_distanceThreshold_cost}
	\end{subfigure}
	\begin{subfigure}{0.33\linewidth}
	\centering
	\includegraphics[width=0.99\linewidth]{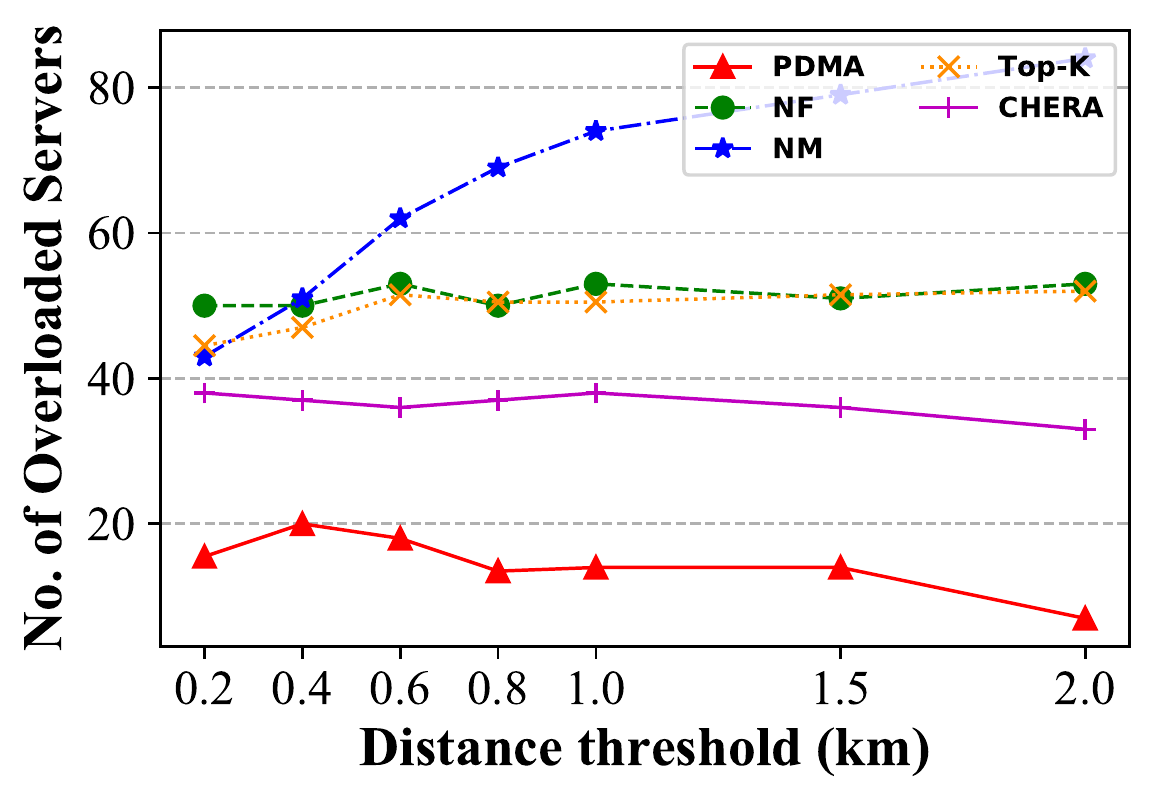}
	\captionsetup{font={large}} 
	\caption{Number of Overloaded Servers}
	\label{fig:RushHour_distanceThreshold_overloaded}
    \end{subfigure}
    % \captionsetup{font={large}} 
	\caption[VarPerOptCom]{Rush Hour:\ Performance comparison of algorithms with varied distance thresholds}
	\label{fig:RushHour_PerformanceVaryingThreshold}
\end{figure*}

%%%% client number
\begin{figure*}[t]
	\centering
	\begin{subfigure}{0.33\linewidth}
		\centering
		\includegraphics[width=0.99\linewidth]{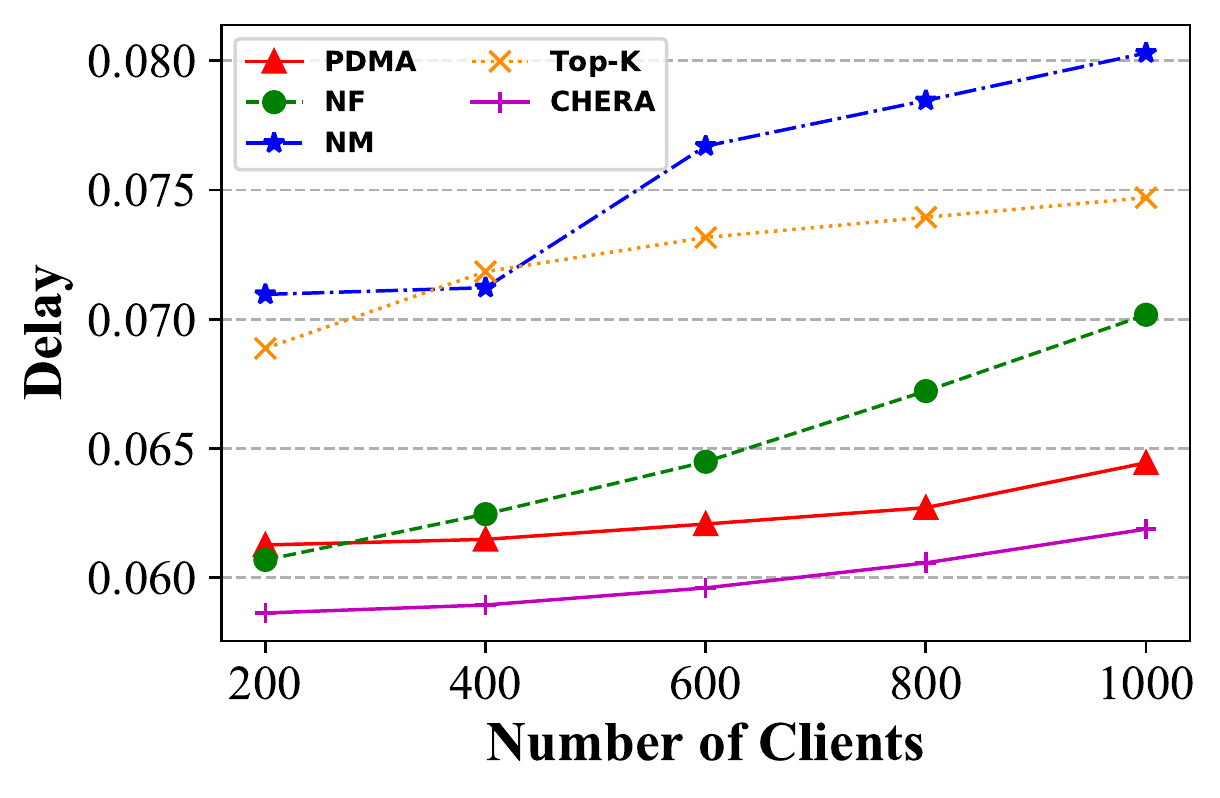}
		\captionsetup{font={large}} 
		\caption{Overall delay}
		\label{fig:RushHour_clientNum_delay}
	\end{subfigure}
	\begin{subfigure}{0.33\linewidth}
		\centering
		\includegraphics[width=0.99\linewidth]{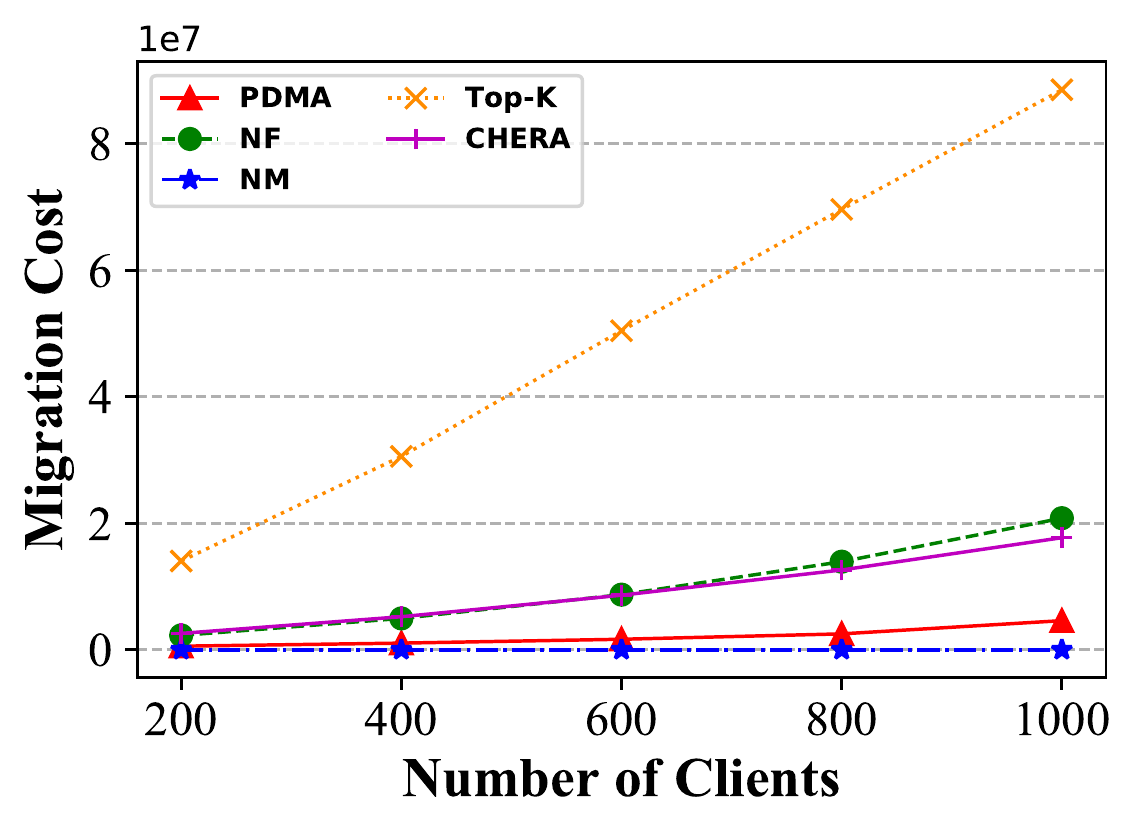}
		\captionsetup{font={large}} 
		\caption{Migration cost}
		\label{fig:RushHour_clientNum_cost}
	\end{subfigure}
	\begin{subfigure}{0.33\linewidth}
	\centering
	\includegraphics[width=0.99\linewidth]{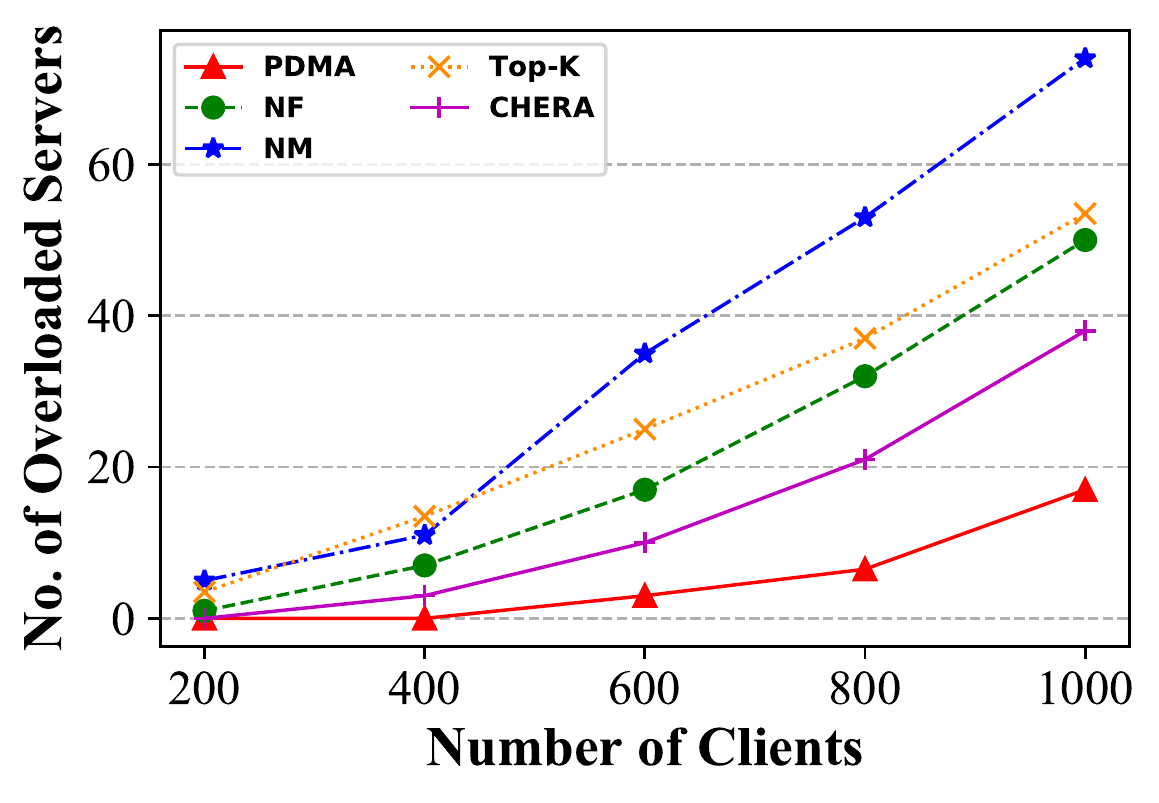}
	\captionsetup{font={large}} 
	\caption{Number of Overloaded Servers}
	\label{fig:RushHour_clientNum_overloaded}
    \end{subfigure}
	\caption[VarPerOptCom]{Rush Hour:\ Performance comparison of algorithms with varied volume of clients}
	\label{fig:RushHour_PerformanceVaryingClientNum}
\end{figure*}

% \begin{figure*}[t]
% 	\centering
% 	\begin{subfigure}{0.49\linewidth}
% 		\centering
% 		\includegraphics[width=0.99\linewidth]{./figure/varyingHostNum/Delay.pdf}
% 		\caption{Overall delay}
% 		\label{fig:DelayVaryingHostNum}
% 	\end{subfigure}
% 	\begin{subfigure}{0.49\linewidth}
% 		\centering
% 		\includegraphics[width=0.99\linewidth]{./figure/varyingHostNum/OverallCost.pdf}
% 		\caption{Migration cost}
% 		\label{fig:CostVaryingHostNum}
% 	\end{subfigure}
% 	\caption[VarPerOptCom]{Performance comparison of algorithms varying the number of edge servers}
% 	\label{fig:PerformanceVaryingHostNum}
	
% \end{figure*}

% \begin{figure*}[t]
% 	\centering
% 	\begin{subfigure}{0.49\linewidth}
% 		\centering
% 		\includegraphics[width=0.99\linewidth]{./figure/varyingClientNum/Delay.pdf}
% 		\caption{Overall delay}
% 		\label{fig:DelayVaryingClientNum}
% 	\end{subfigure}
% 	\begin{subfigure}{0.49\linewidth}
% 		\centering
% 		\includegraphics[width=0.99\linewidth]{./figure/varyingClientNum/OverallCost.pdf}
% 		\caption{Migration cost}
% 		\label{fig:CostVaryingClientNum}
% 	\end{subfigure}
% 	\caption[VarPerOptCom]{Performance comparison of algorithms varying the number of mobile users}
% 	\label{fig:PerformanceVaryingClientNum}
	
% \end{figure*}

\color{black}
\subsection{Scalability Discussions}
In this part, we discuss the scalability of our proposed approach. The problem of optimally allocating edge services to edge servers can be modelled as a bin-packing problem with varied bin sizes and prices, where bins represent the edge servers and items are the edge services to be allocated, bins sizes are the available resources of edge servers, and prices correspond to the communication costs and migration costs when allocating edge services to the edge server. As the problem is very complex and proved to be NP-hard, achieving the optimal solution to the problem can be quite time-consuming, especially for a large MEC system with a huge number of edge servers. We have compared our approach with some deterministic algorithms, e.g., NF and Top-K, which are variants of the classical Best Fit Decreasing (BFD) algorithm that allocates services to the server with the least increased costs. The BFD algorithm is a polynomial algorithm and has been proved to use no more than $11/9\cdot OPT+1$ bins \cite{Anton2012CCPE}, where $OPT$ is the minimum theoretical number of edge servers. 

The centralized and deterministic algorithms, like BFD, can function well for the MEC system with a limited number of edge servers but can be inefficient for large-scale MEC systems considering the NP-hardness of migrating multiple edge services simultaneously. Conversely, given the probabilistic nature of PDMA, it is suitable for large-scale MEC systems. We argue that it is not necessary to send allocation requests to all the edge servers in a large MEC system, as the edge server far away are not prone to be deployed with edge services considering the mobile users are with low probability to move to the distant location within the short time. With the Bernoulli trails in our proposed approach that send allocation requests to part of servers in the system, the traffic overheads can be reduced compared with the BFD-based approaches, and edge servers are added only when strictly needed. Therefore, the required number of edge servers is close to the required number of the BFD algorithm. In addition, PDMA fits well with the large MEC system with distributed edge servers, as each service allocation request can be forwarded to the edge servers in a specific area. This allows the leverage of system heterogeneity by choosing the most cost-efficient edge edges. 

To evaluate the scalability of PDMA, we performed simulations with MEC with different user scales as shown in Figure~\ref{fig:RushHour_PerformanceVaryingClientNum}. The results confirm that as the number of users increases, the migration cost will only increase slightly. The good scalability is also confirmed by the other performance metrics. For example, the number of overloaded edge servers increases slowly with the growth of the number of users. 

\color{black}

\section{Conclusions and Future Work}
This paper addresses the NP-hardness problem of delay-aware and mobility-aware service management in the MEC environment, which is sensitive to the communication costs generated in this environment. The aim is to allocate edge services to the suitable edge servers through the initial assignment and dynamic service migration to satisfy the users in terms of response time when they are moving around. With PDMA proposed in this work, the assignment and migration of edge services are based on Bernoulli trials that decide whether the edge server will accept the deployment of a specific service based on the running status. The probabilistic and low-complexity nature of our proposed approach makes it to be efficient in an environment with a large number of edge servers and rather short execution time. Especially compared with the online learning-based approaches, which can have significant computational complexity growth when the number of servers and services increases.

A theoretical proof has been provided to illustrate that our proposed approach can be bounded to the optimal solution. Simulation results based on iFogSim also demonstrate that our proposed approach can reduce the communication delays for users and transmission costs due to the service migration. For rush hours in the urban city, the proposed approach can efficiently improve the user experience. 

As for future work, we would like to 1) investigate the proposed approach into a prototype system, 2) apply learning-based approach to predict the mobility of users, and 3) integrate offloading techniques into our model to further improve algorithm performance.  

\section*{Acknowledgment}

	This work is supported by Key-Area Research and Development Program of Guangdong Province (NO. 2020B010164003), National Natural Science Foundation of China (No. 62072451, 62072187s), and SIAT Innovation Program for Excellent Young Researchers. 
%\vspace{-0.3cm}

%\bibliographystyle{wileyNJD-AMA}
\bibliography{taxiFogWylie.bib}

\newpage
\section*{Author Biography}

\begin{biography}{\includegraphics[width=65pt]{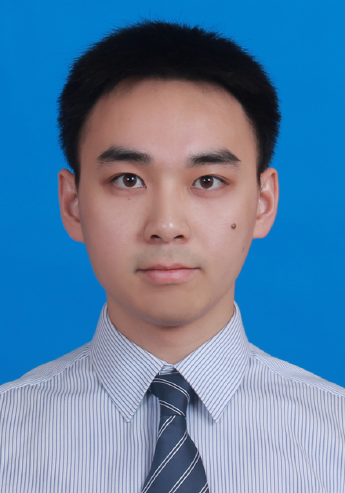}}{\textbf{Minxian Xu} is currently an assistant professor at Shenzhen Institutes of Advanced Technology, Chinese Academy of Sciences. He received the BSc degree	in 2012 and the MSc degree in 2015, both in software engineering from University of Electronic Science and Technology of China. He obtained his PhD degree from the University of Melbourne in 2019. His research interests include resource scheduling and optimization in cloud computing. He has co-authored 20+ peer-reviewed papers published in prominent international journals and conferences, such as 
CSUR, T-SUSC, T-ASE, JPDC, JSS, ICSOC. His Ph.D. Thesis was awarded the 2019 IEEE TCSC Outstanding Ph.D. Dissertation Award. More information can be found at: minxianxu.info.}
\end{biography}
 
 \begin{biography}{\includegraphics[width=65pt]{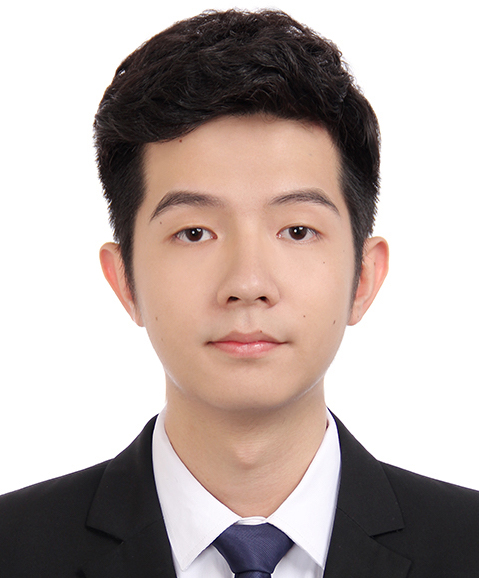}}{\textbf{Qiheng Zhou} received his BSc degree from Sun Yat-sen University. He is currently a master student at National University of Singapore. His research interests include cloud computing and blockchain.}
\end{biography}

\vspace{1.2cm}

\begin{biography}{\includegraphics[width=65pt]{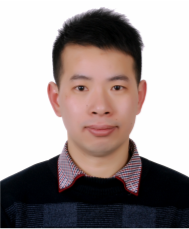}}{\textbf{Huaming Wu} received the B.E. and M.S. degrees from Harbin Institute of Technology, China in 2009 and 2011, respectively, both in electrical engineering. He received the Ph.D. degree with the highest honor in computer science at Freie Universit\"at Berlin, Germany in 2015. He is currently an associate professor in the Center for Applied Mathematics, Tianjin University, China. His research interests include model-based evaluation, wireless and mobile network systems, mobile cloud computing and deep learning.}
\end{biography}
	
	\vspace{0.05cm}
	
\begin{biography}{\includegraphics[width=65pt]{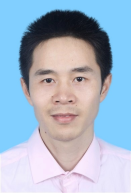}}{\textbf{Weiwei Lin} received his B.S. and M.S. degrees from Nanchang University in 2001 and 2004, respectively, and the PhD degree in Computer Application from South China University of Technology in 2007. He has been serving as visiting scholar at Clemson University from 2016 to 2017. Currently, he is a professor in the School of Computer Science and Engineering, South China University of Technology. His research interests include distributed systems, cloud computing, big data computing and AI application technologies. He has published more than 100 papers in refereed journals and conference proceedings. He has been the reviewers for many international journals, including TPDS, TC, TMC, TCYB, TSC, TCC, etc. }
\end{biography}

\begin{biography}{\includegraphics[width=66pt,height=86pt]{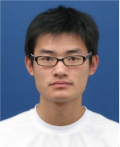}}{\textbf{Kejiang Ye} received his BSc and PhD degree in Computer Science from Zhejiang University in 2008 and 2013, respectively. He was also a joint PhD student at The University of Sydney from 2012 to 2013. After graduation, he works as Post-Doc Researcher at Carnegie Mellon University from 2014 to 2015 and Wayne State University from 2015 to 2016. He is currently a Professor at Shenzhen Institutes of Advanced Technology, Chinese Academy of Science. His research interests focus on the performance, energy, and reliability of cloud computing and network systems.}
\end{biography}

\begin{biography}{\includegraphics[width=66pt,height=86pt]{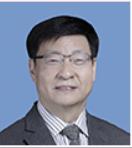}}{\textbf{Chengzhong Xu} (Fellow, IEEE) is the Dean of Faculty of Science and Technology and the Interim Director of Institute of Collaborative Innovation, University of Macau, and a Chair Professor of Computer and Information Science. Dr. Xu’s main research interests lie in parallel and distributed computing and cloud computing, in particular, with an emphasis on resource management for system’s performance, reliability, availability, power efficiency, and security, and in big data and data-driven intelligence applications in smart city and self-driving vehicles. He published two research monographs and more than 300 peer-reviewed papers in journals and conference proceedings; his papers received about 10K citations with an H-index of 52. He serves or served on a number of journal editorial boards, including IEEE Transactions on Computers (TC), IEEE Transactions on Cloud Computing (TCC), IEEE Transactions on Parallel and Distributed Systems (TPDS), Journal of Parallel and Distributed Computing (JPDC), Science China: Information Science and ZTE Communication. Dr. Xu has been the Chair of IEEE Technical Committee on Distributed Processing (TCDP) since 2015. He obtained BSc and MSc degrees from Nanjing University in 1986 and 1989 respectively, and a PhD degree from the University of Hong Kong in 1993, all in Computer Science and Engineering. }
\end{biography}

\end{document}